\documentclass[11pt]{article}

\usepackage[a4paper,margin=1.1in]{geometry}
\usepackage{lmodern}  
\usepackage[T1]{fontenc}
\usepackage[utf8]{inputenc}
\usepackage{amsmath}
\usepackage{amssymb}
\usepackage{mathrsfs}
\usepackage{amsthm}
\usepackage{microtype}
\usepackage{listings}
\usepackage[hidelinks]{hyperref}

\lstdefinestyle{lean}{
  basicstyle=\small\ttfamily,
  columns=fullflexible,
  keepspaces=true,
  showstringspaces=false,
  breaklines=false,
  xleftmargin=1em,
  literate=
    {¬}{{$\lnot$}}1
    {∣}{{$\mid$}}1
    {≤}{{$\leq$}}1
    {ℝ}{{$\mathbb{R}$}}1
    {ℕ}{{$\mathbb{N}$}}1
    {σ}{{$\sigma$}}1
    {φ}{{$\varphi$}}1
    {Φ}{{$\Phi$}}1
}
\lstnewenvironment{lean}{\lstset{style=lean}}{}

\theoremstyle{plain}
\newtheorem{theorem}{Theorem}[section]
\newtheorem{proposition}[theorem]{Proposition}
\newtheorem{lemma}[theorem]{Lemma}

\theoremstyle{definition}
\newtheorem{definition}[theorem]{Definition}
\newtheorem{remark}[theorem]{Remark}

\newcommand{\ZN}{\mathbb{Z}_N}
\newcommand{\phis}{\varphi_s}
\newcommand{\phid}{\varphi_d}
\newcommand{\phir}{\varphi_r}
\newcommand{\phie}{\varphi_e}
\newcommand{\phic}{\varphi_c}
\newcommand{\Rring}{\mathscr{A}_N}

\title{Existence of an infinite family of substrates satisfying all the
postulates of integrated information theory, exclusion included,
with arbitrarily large integrated information}
\author{Arnaud Mayeux}
\date{}

\begin{document}
\maketitle

\begin{abstract}
Integrated information theory assigns to a physical system in a state a number
$\Phi$, and declares a system to be a substrate of experience only if it prevails
over all of its subsystems in system integrated information. Upper bounds on $\Phi$ are
known and grow hyper-exponentially in the number of units. On the lower side there has
been nothing. We exhibit a family of substrates whose $\Phi$ is proved to grow, and which
is proved to be a complex. For every $N \ge 8$ with $3 \nmid N$, let
$\Rring$ be the cycle of $N$ binary units in which each unit takes the exclusive-or of
itself and its two clockwise neighbours. We prove two things about it, concerning two
different quantities. First, in every state and against every background, $\Rring$ is a
complex: its system integrated information $\phis$, the quantity in which candidate
substrates compete, strictly exceeds that of every nonempty proper subsystem. Second, at
the all-ones state $\mathbf{1}$, its integrated information $\Phi$---the sum over its
distinctions and their relations---satisfies
\[
  \Phi_{\max}(\Rring, \mathbf{1}) \;\ge\;
  2^{\lfloor N^{2}/8\rfloor}\cdot\frac{2^{-N}}{2N} - 1 ,
\]
so that $\log_2 \Phi$ grows quadratically in the number of units: already at twenty
units the bound exceeds $2.6\times10^{7}$, and at forty units $1.8\times10^{46}$. Both
statements, and every lemma they rest on, are verified in the Lean~4 proof assistant
against a formalization of the definitions of IIT~4.0, with no unproved steps. The bound
comes from the contiguous arcs of the cycle: we prove that every arc of length at least
three is an irreducible mechanism, and that quadratically many of them share a unit.
\end{abstract}

\section{Introduction}
\label{sec:intro}

Integrated information theory begins with experience and works backwards. From a
small set of properties that any experience is taken to have---that it exists, and
is structured, specific, unitary and definite---the theory derives conditions that a
physical system must satisfy to be the substrate of that experience, and assigns to
such a system a single number: its integrated information, a measure of how much the
system exists as one thing rather than as a collection of parts. That number is the
subject of this paper.

A basic question about any such quantity is its range. How large can the integrated
information of a system of a given size be, and which architectures achieve it? This
is a mathematical question, answerable without taking a position on the theory's
claims about consciousness, and its answer bears on those claims either way: a
quantity that varies by a small factor across all possible systems means something
different from one whose variation is astronomical.

One half of this question has been answered, and answered well. Zaeemzadeh and
Tononi~\cite{zaeemzadeh2024} established upper bounds on the integrated information
of distinctions and of their relations, and hence on that of a structure as a whole.
Their bounds grow hyper-exponentially in the number of units, and their analysis goes
further than the bounds themselves: it identifies when an individual mechanism
saturates its own bound, isolates the obstruction to saturating all of them at
once---mechanisms sharing parts cannot all be maximally integrated---and advances the
well-motivated conjecture that densely connected, grid-like architectures are the
natural candidates for high values. It is a careful and clarifying piece of work, and
it is the foundation on which the present paper builds. Its authors are explicit that
attainment in general remains open, and they are right to be: showing that some system
actually reaches a given magnitude is a different undertaking from bounding what any
system could reach.

That is the gap this paper addresses. To the best of our knowledge nothing whatever is
known on the lower side: no family of systems of growing size has been proved to have
integrated information growing at any rate, and no substrate has been proved to satisfy
the exclusion postulate. Both are absent for the same reason. Everything established so
far about attainment rests on evaluating one system at a time, and a value computed at a
given size, however exactly, settles nothing about how the quantity behaves as the size
grows. It is worth being precise about what \emph{is} available, because a good deal is. Zaeemzadeh and Tononi do not merely bound:
the proof of their Theorem~3 is constructive, exhibiting transition matrices on
overlapping mechanisms, and their numerical evaluation of those constructions grows
super-exponentially in the number of units. It has also long been appreciated, in
examples that circulate informally rather than in print, that highly expansive
architectures give large values of the bipartition-normalized measure used in earlier
formulations of the theory. Both are genuine sources of insight, and both leave open the
question at issue here, which is not whether large values occur but whether they occur
\emph{on a substrate the theory recognizes}: the constructions are not shown to satisfy
exclusion, and the authors of~\cite{zaeemzadeh2024} are explicit that this is outside
their scope, writing that their work is focused on composition and that their ``results
are applicable even if a candidate system does not satisfy all the essential
properties''---exclusion being one of the five they list. Until a family is exhibited that
prevails over all of its subsystems and whose integrated information is proved to grow
with size, the interval between zero and the known ceiling is not known to be occupied at
any point the theory would call a subject.

There are two good reasons why this half of the question has proved the harder one.

The first is the mathematics itself, which is unusually resistant. Integrated
information is not an average or a smooth functional of a system's parameters. It is
defined through several optimizations nested inside one another: over every way of
carving a system into parts, taking the worst case; over candidate causes and effects,
taking the best; and over every subset of the system, since the theory holds that the
subject is whichever subset wins the competition and that overlapping candidates cannot
coexist. Each ranges over a set growing exponentially with system size, and they
interact, so none can be treated in isolation. The quantity is also not
compositional---the value of a system is not assembled from the values of its
parts---and it can move discontinuously under small changes in wiring. The familiar
tools for showing that a quantity is large, such as averaging arguments or convex
relaxations, either do not apply or destroy the feature being measured. A lower bound
therefore requires exact control of all of the nested optimizations at once, and
uniformly in the size of the system.

The second reason concerns instruments rather than mathematics. The study of integrated
information has been advanced above all by computation, and by one instrument in
particular: PyPhi~\cite{pyphi}, the open-source implementation of the theory's
definitions, which has been indispensable to the field and to us. Nearly every
structural insight now in circulation, including several that guided this paper,
originated in exact evaluation of small systems. Such evaluation, however, carries an
inherent horizon at roughly a dozen units, and it answers questions of the form
\emph{what is the value of this system} rather than \emph{how does the value grow with
size}. There is also a subtlety that no amount of computing power removes: at the sizes
that can be evaluated, architectures with very different asymptotic behaviour may be
nearly indistinguishable, because the structures responsible for rapid growth have not
yet had room to appear. Evidence gathered within the computable range is therefore
excellent for discovering candidates and indispensable for testing conjectures, but it
cannot by itself settle a question about growth. That calls for proof---and, given the
case analyses involved, for proof of a kind impractical to carry out and to check by
hand.

Our contribution is a lower bound. We exhibit an explicit family of
architectures---uniform, sparse, and without hub or hierarchy, each unit reading three
consecutive neighbours on a cycle---and prove that its integrated information grows as
two raised to a power quadratic in the number of units. We prove that these systems
prevail over every one of their subsystems, so that the quantity bounded is the
integrated information of a genuine complex and not of a substrate the theory would
decline to recognize. The bound rests on a structural fact we prove about these systems:
every contiguous arc of the cycle of length at least three is an irreducible mechanism,
and quadratically many such arcs share any given unit. Taken with the general ceilings
already known, this places the first certified point inside the range of the theory's
central quantity.

Our method is to leverage Lean~\cite{lean4}, a proof assistant: definitions and theorems
are written in a formal language, and a small kernel checks every inference. Nothing is
granted because it looks right or because an expert vouches for it. Its library Mathlib
now runs to over a million lines, and major theorems are routinely checked this way.
What that buys here is a change of instrument rather than an increase in power:
mechanical \emph{reasoning} in place of mechanical \emph{computation}. The field's existing machinery evaluates the theory's definitions on
a given system, and is thereby confined to the systems small enough to evaluate; a proof
assistant instead manipulates those same definitions symbolically, with a system size
left as a variable, so that a single argument settles every size at once. The
bookkeeping that makes such an argument prohibitive by hand---hundreds of cases, each
individually routine---is exactly what a machine sustains reliably, and the kernel
certifies that none of it was waved through. Every theorem stated in this paper has been
verified in Lean against a formalization of the theory's definitions, with no unproved
steps and no appeals to numerical evaluation that the kernel does not itself check,
with the single exception of Proposition~\ref{prop:onlyarcs}, whose status is stated
explicitly where it appears; the development is available as~\cite{circ-repo} and can be
re-checked independently. The proofs were produced
by human mathematical direction working together with artificial-intelligence assistants
inside that formal setting. The strategic content---which architectures could satisfy
both demands at once, why hub-like designs are structurally disqualified, the reduction
of the system-level competition to a finite combinatorial question about damage and
cuts, and the identification of the irreducible parts---was ours; the assistants
supplied scale, and the proof assistant certainty. We describe this arrangement in
Section~\ref{sec:method}, as we believe it applies well beyond the present problem.

We should add that formal verification was not a precaution taken after the fact. Our
original conjecture for this family was more ambitious: we expected its integrated
information to grow doubly exponentially, and a substantial development was built toward
that end. The conjecture is false, and the argument refuting it is short---the
irreducible parts are all contiguous arcs, of which there are only quadratically many.
No formalization could have caught this, since the error lay in an expectation about the
object rather than in any inference. What the formal development did was make the
surviving results dependable, and the failure precise enough to convert into the
structure theorem reported here. It illustrates the difficulty we are describing: the
principal hazard here is not incorrect reasoning, but correct reasoning about an object
that cannot support the intended conclusion, and often only a proof carried to the end
distinguishes the two.

\paragraph{Organization.} Section~\ref{sec:prelim} fixes the definitions of IIT~4.0 in
the form in which they are formalized. Section~\ref{sec:substrate} introduces the cycle
$\Rring$ and states the two main theorems. Section~\ref{sec:damage} develops the device
on which the system-level analysis rests, an exact reduction of $\phis$ to a
combinatorial \emph{damage} count. Section~\ref{sec:exclusion} proves that $\Rring$ is a
complex. Section~\ref{sec:arcs} determines its distinctions, and
Section~\ref{sec:lower} derives the lower bound. Section~\ref{sec:method} describes the
working arrangement and Section~\ref{sec:formal} the formal development.

\paragraph{How to read this paper.} The paper is addressed to three audiences at once,
and they need not read the same parts. A reader interested in what the result says about
the theory will find the substance in Sections~\ref{sec:prelim}
and~\ref{sec:substrate}---the object, the two theorems, and what they do and do not
establish---and in Section~\ref{sec:method}; nothing there presupposes more than the
definitions of IIT~4.0, and the two theorems are stated in words before they are stated
in symbols. A reader interested in the mathematics will want
Sections~\ref{sec:damage}--\ref{sec:lower}, where the reduction of system integrated
information to a combinatorial count is set up and the classification arguments are
carried out. A reader interested in the formalization will want
Section~\ref{sec:formal}, which names the Lean theorems corresponding to each statement
and records what the kernel checks. We have kept the statements themselves exact
throughout, on the view that a claim about a quantity as intricate as this one is worth
less when it is paraphrased.

\section{Preliminaries}
\label{sec:prelim}

We assume IIT~4.0 as formulated by Albantakis et al.~\cite{iit40} and do not restate it.
The definitions are used here in their formalized form: they have been transcribed into
Lean~4 in a separate library, released as~\cite{integ-repo} and documented
in~\cite{mayeux-formalization}. That library contains the definitions together with
general results about them, among them a formalization of the ceiling
of~\cite{zaeemzadeh2024}; what it does not contain is any result about a particular
substrate. Everything proved in the present paper is stated
against that library and adds to it without modifying it, so that the correspondence
between the theory as published and the theory as formalized is settled there, once, and
is not re-opened by anything done here. We fix notation and record the three points at
which the formal setting requires a decision.

It may help to recall what the three quantities in play are for, since the argument moves
between them. The mechanism integrated information $\phid$ is local: it measures how much
a particular subset of units, in its current state, specifies a cause and an effect that
its own parts cannot account for. The system integrated information $\phis$ is global: it
measures how much a whole candidate system resists being cut into causally independent
pieces, and it is the quantity in which candidates compete, since the theory awards
substratehood to the candidate that maximizes it. Finally $\Phi$ counts the structure:
it sums $\phid$ over the irreducible mechanisms and $\phir$ over the relations among
them, and it is the quantity the theory identifies with how much an entity exists
intrinsically. The two are largely independent of one another, and this is what makes the
problem here a conjunction rather than a single optimization: we must make $\phis$ of the
whole beat $\phis$ of every part---an argument about cuts---while simultaneously making
$\Phi$ large---an argument about how many mechanisms are irreducible and how their
purviews overlap.

Throughout, a substrate is a set of $N$ binary units whose transition probability matrix
factorizes over units, and every substrate here is deterministic. For a mechanism $M$ in
state $m$ we write $\phie(m)$ and $\phic(m)$ for the mechanism integrated effect and
cause information, $\phid(m) = \min\{\phie(m),\phic(m)\}$, and, for a distinction, its
\emph{support} for the union of its maximally irreducible cause and effect purviews,
taken as sets of \emph{stated units}, that is of pairs (unit, state): a unit shared in
opposite states is not shared.
For a system partition $\theta$ of a candidate set $S$ we write
$\mathcal{N}(\theta) = \sum_{p} |p| \cdot |\mathrm{cutSet}(p)|$ for the sum over the
blocks of $\theta$ of the number of connections severed into each, and $\phis(S, s)$ for
the system integrated information of $S$ when the universe is in state $s$, obtained by minimizing
$\phis(\theta)/\mathcal{N}(\theta)$ and taking the supremum of $\phis(\theta)$ over the
minimizing set. Both $\phis$ and $\Phi$ depend on the state, and on the background $u$
against which the units outside the candidate are held; we display the state, as in
$\phis(S,s)$ and $\Phi_{\max}(T,s)$, and leave $u$ implicit, since every statement below
is either proved for all $u$ or explicitly quantified over it. The two quantities should
not be confused: $\phis$ is what candidates compete in and is the subject of the
exclusion postulate, while $\Phi$ measures the structure a winning candidate specifies.

First, ties. The maximizing cause and effect purviews of a mechanism need not be unique,
and the value of $\Phi$ can depend on which are chosen. The formalization therefore
parametrizes a cause--effect structure by a \emph{selector} $\sigma$, choosing one
congruent maximizing pair per mechanism, writes $\Phi(\sigma)$ for the resulting value,
and sets $\Phi_{\max} = \sup_\sigma \Phi(\sigma)$. Our lower bound is stated for
$\Phi_{\max}$ but is proved from a family of distinctions that every selector must
contain, so it holds for each selector individually.

Second, exclusion. A candidate $S$ is a \emph{complex} in state $s$ when
$\phis(S,s) > \phis(S',s)$ for every $S' \ne S$ overlapping $S$ within the universe
$U$~\cite[Eq.~26]{iit40}. We prove
the case $S = U$, where every nonempty candidate overlaps $S$ and the condition becomes
the statement that the whole strictly exceeds each of its nonempty proper subsystems.
This is the strongest reading: it leaves no room for a proper subsystem to tie, and none
for a disjoint competitor to escape the comparison.

Third, background. On a proper subsystem the cause side depends on the convention for
the units outside it. We use the marginalization of~\cite[Eqs.~3--4]{iit40} throughout, which
is what makes $\phis$ of a proper subsystem sensitive to the candidate, and hence what
gives the exclusion postulate content.

\section{The substrate and the results}
\label{sec:substrate}

The system we study is about as plain as a system can be. Arrange $N$ binary units in a
circle. At each tick, every unit looks at itself and its two clockwise neighbours, and
turns on exactly when an odd number of those three are on. Every unit obeys the same
rule, no unit is distinguished, and each reads only three others however large the cycle
grows. We return in Section~\ref{sec:method} to why we looked at so plain an object in
the first place.

\begin{definition}[The cycle $\Rring$]
\label{def:ring}
Fix $N \ge 3$. Index the units by the cyclic group $\ZN = \mathbb{Z}/N\mathbb{Z}$, so
that indices are read modulo $N$ throughout, and let the state space be
$\Omega_N = \{0,1\}^{\ZN}$. Let $\Rring$ be the substrate on these units whose
transition probability matrix is deterministic and given unit by unit by
\[
  p_z\bigl(s'_z = b \;\big|\; s\bigr) \;=\;
  \begin{cases}
    1, & b = s_z \oplus s_{z+1} \oplus s_{z+2},\\[2pt]
    0, & \text{otherwise,}
  \end{cases}
  \qquad z \in \ZN,\quad s \in \Omega_N,
\]
where $\oplus$ denotes addition modulo $2$. Equivalently, $\Rring$ is the deterministic
substrate whose global transition map is
\[
  F_N : \Omega_N \to \Omega_N, \qquad
  (F_N s)_z \;=\; s_z \oplus s_{z+1} \oplus s_{z+2}.
\]
The \emph{window} of unit $z$ is $W_z = \{z, z+1, z+2\}$, the set of units it reads. In
the formal development $\Rring$ is \texttt{circNet N}.
\end{definition}

Three properties of $\Rring$ are used below, and we isolate them before proceeding.

\begin{lemma}
\label{lem:ring-basic}
Let $N \ge 3$ and identify $\Omega_N$ with $\mathbb{F}_2^{\ZN}$.
\begin{enumerate}
\item[(i)] $F_N$ is $\mathbb{F}_2$-linear, with matrix the circulant
  $C = I + P + P^{2}$, where $P$ is the cyclic shift $(Ps)_z = s_{z+1}$.
\item[(ii)] $F_N$ is bijective if and only if $3 \nmid N$.
\item[(iii)] The all-ones state $\mathbf{1} \in \Omega_N$ satisfies $F_N(\mathbf{1}) =
  \mathbf{1}$.
\end{enumerate}
\end{lemma}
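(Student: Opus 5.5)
Parts (i) and (iii) are direct computations from Definition~\ref{def:ring}, and the substance is in (ii), which I will reduce to polynomial arithmetic over $\mathbb{F}_2$.

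For (i), the $z$-th coordinate of $F_N s$ is $s_z + s_{z+1} + s_{z+2}$ over $\mathbb{F}_2$. Since $(Ps)_z = s_{z+1}$ and $(P^2 s)_z = s_{z+2}$, this is exactly $((I+P+P^2)s)_z$, a linear expression in $s$. For (iii), every coordinate of $F_N(\mathbf{1})$ is $1\oplus1\oplus1 = 1$.

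For (ii), identify $\mathbb{F}_2^{\ZN}$ with the ring $R = \mathbb{F}_2[x]/(x^N - 1)$, with $P$ acting as multiplication by $x$ (or $x^{-1}$; invertibility is unaffected). Then $C$ is multiplication by $c(x) = 1 + x + x^2$, and since $R$ is finite, this map is bijective if and only if $c$ is a unit in $R$. That holds if and only if $\gcd(1+x+x^2,\, x^N - 1) = 1$ in $\mathbb{F}_2[x]$. The polynomial $1+x+x^2$ is irreducible over $\mathbb{F}_2$, since it has no root in $\mathbb{F}_2$, so the gcd is nontrivial exactly when $1+x+x^2$ divides $x^N - 1$. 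Its roots are primitive cube roots of unity $\omega \in \mathbb{F}_4$, and $\omega^N = 1$ iff $3 \mid N$. Hence $F_N$ is bijective iff $3 \nmid N$.

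As an elementary alternative (closer to what a Lean proof might do), I would argue through the kernel, since injectivity and bijectivity coincide on a finite set. Suppose $s_z + s_{z+1} + s_{z+2} = 0$ for all $z$. Subtracting consecutive equations gives $s_{z+3} = s_z$, so $s$ is $3$-periodic; on a cycle of length $N$ it is therefore constant when $3 \nmid N$. A constant state $s = (a,\dots,a)$ gives $3a = a = 0$, so the kernel is trivial. When $3 \mid N$, the state $(1,1,0,1,1,0,\dots)$ is a nonzero kernel element.

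The only step needing care is the kernel computation modulo $N$. Passing from $s_{z+3} = s_z$ to constancy uses that $3$ generates $\ZN$ when $\gcd(3,N) = 1$. I expect this cyclic-index bookkeeping to be the main obstacle in a formal setting, though mathematically it is routine.
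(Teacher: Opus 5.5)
Your proof is correct and follows the paper's argument. Parts (i) and (iii) are direct computations, and for (ii) you pass to $\mathbb{F}_2[x]/(x^N-1)$ and show that $\gcd(1+x+x^2,\,x^N-1)=1$ exactly when $3\nmid N$ via the primitive cube roots of unity; your explicit use of the irreducibility of $1+x+x^2$ fills in a step the paper leaves implicit, and your kernel argument ($s_{z+3}=s_z$, hence constant, hence zero when $3\nmid N$, with the witness $(1,1,0,1,1,0,\dots)$ when $3\mid N$) is a valid elementary alternative that the paper does not give.
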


\begin{proof}
(i) is immediate from the definition, exclusive-or being addition in $\mathbb{F}_2$.
For (ii), circulants form an $\mathbb{F}_2$-algebra isomorphic to
$\mathbb{F}_2[x]/(x^{N}-1)$, the shift $P$ corresponding to $x$; under it $C$ becomes
$f(x) = 1 + x + x^{2}$, and $C$ is invertible exactly when $f$ is a unit, that is when
$\gcd\bigl(f(x),\, x^{N}-1\bigr) = 1$. Since
$f(x) = (x^{3}-1)/(x-1)$, the roots of $f$ are the primitive cube roots of unity, and
these are $N$-th roots of unity precisely when $3 \mid N$. For (iii), each window has
three units, and $1 \oplus 1 \oplus 1 = 1$.
\end{proof}

Item (ii) is not a technical convenience. Bijectivity is what forces the past of the cycle
to be as determinate as its future; it is the hypothesis under which the cause side of
the analysis collapses onto the effect side in Proposition~\ref{prop:damage}, and it is
the reason the arithmetic condition $3 \nmid N$ appears in both main theorems. By item
(iii) we may take $\mathbf{1}$ as the state throughout, and we do; the exclusion theorem
is nevertheless proved for every state.

\begin{remark}
Informally, $\Rring$ is uniform, sparse and flat: it is invariant under rotation, each
unit reads exactly three units however large $N$ grows, and no unit is distinguished from
any other. In particular it is the opposite of the densely connected architecture one
might expect a large value to require.
\end{remark}

Our two results are the following. The first says that the theory recognizes this cycle as
a subject rather than as a heap: no part of it, however chosen, is more integrated than
the whole. The second says that what the theory then assigns to it is astronomically
large.

\begin{theorem}[Exclusion]
\label{thm:exclusion}
For every $N \ge 8$ with $3 \nmid N$, every background $u$ and \emph{every} state $s$,
the substrate $\Rring$ is a complex:
\[
  \phis(\Rring, s) \;>\; \phis(S', s)
  \qquad \text{for every nonempty proper } S' \subsetneq \ZN .
\]
More precisely $\phis(\Rring, s) \ge 4$ while $\phis(S', s) < 4$ for every such $S'$.
\end{theorem}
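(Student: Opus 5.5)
The plan is to reduce everything to a combinatorial \emph{damage} count, as announced for Section~\ref{sec:damage}. For a deterministic substrate, the effect repertoire of a candidate $S$ in state $s$ is a point mass, and a system partition $\theta$ replaces the severed inputs of each unit by the background or marginalized distribution. A unit whose window $W_z$ loses at least one input to the cut then has an exactly uniform output bit, since it is the exclusive-or with an independent uniform bit. The intrinsic information lost is therefore $\log_2 2 = 1$ per damaged unit. I would first prove that $\phie(\theta)$ equals the number of units $z\in S$ with $W_z\cap \mathrm{cut}\neq\emptyset$, call it $D(\theta)$. Then, using Lemma~\ref{lem:ring-basic}(ii), I would show that $\phic(\theta)$ is at least $D(\theta)$ as well. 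Bijectivity makes the cause repertoire of the whole a point mass on $F_N^{-1}(s)$, and the same counting applies through the inverse, which is again a linear map. The result is $\phis(\theta)=D(\theta)$ independently of $s$ and $u$, which is why the theorem holds for every state and background.

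For the whole system $U=\ZN$, I would lower-bound $\min_\theta D(\theta)/\mathcal{N}(\theta)$ and read off the supremum over the minimizers. Each unit reads exactly three units, so a cut severing $k$ connections damages at least $\lceil k/3\rceil$ units. More to the point, any nontrivial partition of a cycle must sever the two boundary crossings of each block. Because windows have span $2$, each boundary between consecutive blocks damages about two units. I would enumerate the cut shapes up to rotation and show that the minimizing normalized partitions all have $D\ge 4$. The candidate minimizer is a unidirectional cut of one contiguous arc against its complement, which damages two units on each side of the two arc endpoints. Here $N\ge 8$ is needed so that these four damaged units are distinct and the arcs do not wrap onto themselves.

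For a proper subsystem $S'$, the units of $S'$ whose windows leave $S'$ are already marginalized in both the unpartitioned and the partitioned repertoires. They therefore contribute no information, and only units with $W_z\subseteq S'$ can be damaged. The subsystem decomposes into maximal contiguous runs. If $S'$ is disconnected in the induced window graph, a cut separating two runs severs nothing and gives $\phis=0$. If $S'$ is a single arc, it is a path rather than a cycle, so a single boundary suffices. Cutting off the last unit of the arc damages at most the two units whose windows reach it, and the minimizing normalized partition yields $D\le 3<4$. I would organize this as a case split: $S'$ disconnected, giving $0$, versus $S'$ an arc of length $\ell<N$, where I exhibit an explicit partition with small normalized damage and check that every minimizer has damage at most $3$.

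The main obstacle I expect is the normalization interacting with the supremum over minimizers. The definition minimizes $\phis(\theta)/\mathcal{N}(\theta)$ and then takes the largest $\phis(\theta)$ among the minimizers. For $U$ I therefore need to identify all minimizers of $D/\mathcal{N}$ and show that each has $D\ge4$. For $S'$ I need to show that every minimizer has $D\le3$, not merely exhibit one. My first step would be to prove the inequality $D(\theta)\ge c\,\mathcal{N}(\theta)$ with the exact constant attained by the one-arc cut on the cycle. I would then characterize equality cases by a local counting argument at each block boundary. The resulting finite case analysis over boundary configurations is what I would hand to the formal development.
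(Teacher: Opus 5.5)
\paragraph{Whole cycle.} Your reduction $\phis(\theta)=\mathrm{dmg}(\theta)$ on the whole cycle agrees with Proposition~\ref{prop:damage}, but the extremal analysis you build on it fails, because the one-arc cut is not a minimizer of $\mathrm{dmg}/\mathcal{N}$. A cut severing the inputs of an arc $P$ with $|P|\ge 2$ damages only the two units at the clockwise end of $P$. It does not damage two units at each endpoint, since the complement's units that read $P$ keep their inputs. So its ratio is $2/(|P|(N-|P|))\ge 8/N^{2}$. The bidirectional version doubles both damage and normalizer, so it has the same ratio. The paper's witness $\vartheta^{\ast}$ (three arcs of sizes $a,k,k$) damages at most four units and satisfies $N^{2}<2\mathcal{N}(\vartheta^{\ast})$, so its ratio is strictly smaller. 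Your heuristic that severing $k$ connections damages about $k/3$ units conflates $\mathcal{N}$, which counts all severed ordered pairs, with severed wires. Pairs $(i,j)$ where $j$ does not read $i$ cost nothing, and that is what lets three blocks undercut one. Hence the inequality $D(\theta)\ge c\,\mathcal{N}(\theta)$ with $c$ attained by the one-arc cut is false. The plan is also inconsistent on its own terms: a damage-two minimizer would refute ``every minimizer has $D\ge 4$''. The missing idea is that you never need the exact minimum. Rule out damage $0$, prove the floors of Lemmas~\ref{lem:d1}--\ref{lem:d3} on $\mathcal{N}$ at damage $1,2,3$, and exhibit one partition of damage at most four whose ratio undercuts all three floors strictly. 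Then every minimizer damages at least four, and the supremum over the argmin gives $\phis\ge 4$.

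\paragraph{Proper subsystems.} Your premise that only units with $W_z\subseteq S'$ can be damaged contradicts the damage model. Units outside $S'$ are held at the background and simply drop out of the window. A unit whose window leaves $S'$ keeps its remaining live inputs, and severing one of them flattens it. Under the actual model your dichotomy ``disconnected or a single arc'' is not exhaustive. For example, $\ZN\setminus\{d_1,d_2\}$ with $d_1,d_2$ non-adjacent is connected (unit $d_1-1$ still reads $d_1+1$) but is not an arc. Even for arcs, ``every minimizer has $D\le 3$'' is asserted rather than proved, and it is the hard part.

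The paper never classifies competitors or their minimizers. Take $d\notin S'$ with $d-1\in S'$. Let $P$ be the intersection of $S'$ with a backward arc from $d-1$, sized so that $|P|=\lfloor |S'|/2\rfloor$, with $P$ severing its inputs. Because $d$ is deleted rather than cut, this partition damages one unit (Lemma~\ref{lem:shield}), and $\mathcal{N}\ge (|S'|^{2}-1)/4$. Every partition has $\mathcal{N}\le |S'|(|S'|-1)$, so every minimizer $\theta$ satisfies $\phis(\theta)\le |S'|(|S'|-1)\cdot 4/(|S'|^{2}-1)=4|S'|/(|S'|+1)<4$, whatever it looks like. Note also that on proper subsystems $\phis(\theta)$ is not state-independent in general, because of the factor $2^{-\operatorname{corank}(C_{S'})}$ of Proposition~\ref{prop:competitor}. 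Only the upper bound $\phis(\theta)\le\mathrm{dmg}(\theta)$ is needed there.
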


\begin{theorem}[Lower bound]
\label{thm:lower}
For every $N \ge 8$ with $3 \nmid N$ and every background $u$, the integrated
information of $\Rring$ \emph{at the all-ones state} $\mathbf{1}$ satisfies
\[
  \Phi_{\max}(\Rring, \mathbf{1}) \;\ge\;
  2^{\lfloor N^2/8 \rfloor}\cdot\frac{(1/2)^{N}}{2N} \;-\; 1 .
\]
In particular $\log_2 \Phi_{\max}(\Rring, \mathbf{1}) = \Omega(N^2)$. Unlike
Theorem~\ref{thm:exclusion}, this is a statement about a single state.
\end{theorem}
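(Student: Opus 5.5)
The plan is to bound $\Phi_{\max}$ from below by the relation part alone, using the arcs. Since every selector must contain the arc distinctions (so the bound holds for each $\sigma$ individually, as announced in Section~\ref{sec:prelim}), I drop all other distinctions and keep only those relations whose relata are all arc distinctions; both $\phid$ and $\phir$ are nonnegative, so this only decreases $\Phi$. I rely on two inputs, both proved in later sections. The first is that for $N\ge 8$ with $3\nmid N$, at state $\mathbf{1}$, every contiguous arc $A=\{a,\dots,a+\ell-1\}$ with $\ell\ge 3$ is an irreducible mechanism. Its maximally irreducible effect and cause purviews are determined explicitly by the circulant $C$: the effect purview is the set of units whose windows meet $A$, and the cause purview is the corresponding preimage set, which is determinate because $F_N$ is bijective (Lemma~\ref{lem:ring-basic}(ii)). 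The second is a uniform lower bound on the resulting $\phid$; I expect something like $\phid \ge 2^{-N}$ up to a factor. This bound comes from $\phie$, $\phic$ being informativeness$\times$selectivity with selectivity at least $2^{-N}$ on $N$ units. The factor $\frac{(1/2)^N}{2N}$ in the statement suggests exactly this form.

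The combinatorial core counts arcs through a common unit. I fix unit $0$ and count arcs of length $\ell\ge 3$ containing it: there are $\ell$ such arcs for each $\ell$, and I keep lengths up to about $N/2$. This yields a family $\mathcal{F}$ of size at least $\lfloor N^2/8\rfloor$. I need every arc in $\mathcal{F}$ to contain the stated unit $(0,1)$ in its support, in the same state. This holds because the state is $\mathbf{1}$, so no sign clash arises, and because the purviews contain the mechanism's units shifted by the window. Every nonempty subfamily $R\subseteq\mathcal F$ with $|R|\ge 2$ then has a common overlap of supports and hence is a relation. The IIT~4.0 relation value is $\phir(R)=\min_{d\in R}\phid(d)\cdot(\text{overlap factor})$, and I would bound the overlap factor below by something of order $1/|{\rm purview}|\ge 1/(2N)$. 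Summing over the $2^{|\mathcal F|}-|\mathcal F|-1$ subfamilies gives roughly $2^{\lfloor N^2/8\rfloor}\cdot\frac{2^{-N}}{2N}$. I would then absorb the lost singleton and empty terms into the "$-1$", using the distinctions' own $\phid$ contribution to pay for the singletons.

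The key steps, in order, are as follows. First, compute the purviews of an arc at $\mathbf 1$. Second, prove the arc irreducibility with the explicit $\phid$ lower bound. Third, show that the arcs through $0$ share a stated unit. Fourth, lower-bound $\phir$ for any subfamily. Fifth, sum the contributions and compare with the formula.

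The main obstacle is the second step: proving irreducibility of an arc for every partition of mechanism and purview, uniformly in $\ell$ and $N$. I would attack it by linearity. Over $\mathbb F_2$, the repertoire of a partitioned mechanism factors into products of uniform marginals on the units fed by cut connections. Any partition of a contiguous arc cuts at least one window straddling the cut, and that window's output becomes uniform. This forces a loss of at least a factor $2$ in the probability of the actual state, giving $\phie,\phic\ge$ selectivity $\cdot\log 2$. The endpoints, where windows read outside the arc, need the condition $\ell\ge3$ and $N\ge8$ to avoid wraparound coincidences; that is where I would expect the case analysis to concentrate.
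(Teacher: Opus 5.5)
This is essentially the paper's proof. It uses a family of arcs through unit $0$, and Proposition~\ref{prop:arcsare} supplies the common stated unit $(0,\mathrm{true})$ through the cause purview, which contains the arc. The floor $c=2^{-N}/(2N)$, from $\varphi_d\ge 2^{-N}$ and $|\mathrm{supp}|\le 2N$, is then fed into Lemma~\ref{lem:family}, so your ``main obstacle'' is exactly what that proposition provides. One correction on the purviews: the effect side lives only on $D(A)$, the units whose windows lie inside $A$, not on all units whose windows meet $A$.

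The one repair needed is the count. Arcs through $0$ of lengths $3,\dots,\lfloor N/2\rfloor$ number $\lfloor N/2\rfloor(\lfloor N/2\rfloor+1)/2-3$, which falls short of $\lfloor N^2/8\rfloor$ at $N=8$ and for every odd $N$ (e.g.\ $18<21$ at $N=13$, where the bound is no longer vacuous). Either keep all lengths $3,\dots,N-1$, giving $N(N-1)/2-3$ members, or use the paper's arcs from $-s$ to $e$ with $1\le s<\lfloor N/2\rfloor$ and $1\le e<N-\lfloor N/2\rfloor$, of which there are $(\lfloor N/2\rfloor-1)(N-\lfloor N/2\rfloor-1)\ge N^2/8$.
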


Theorem~\ref{thm:lower} is stated with deliberately crude constants; the exponent
$\lfloor N^2/8\rfloor$ and the floor $2^{-N}/(2N)$ are both far from optimal, and the
displayed quantity is negative, hence vacuous, for $N \le 11$, becoming positive at
$N = 13$ and growing thereafter ($8.8$, $35.6$, $2.0\times10^3$, $2.68\times10^7$ at
$N = 13, 14, 16, 20$). We comment on the true constants in
Remark~\ref{rem:constants}. What matters is the growth rate, and for that the crude
constants suffice.

Together the two theorems say that the quantity bounded in Theorem~\ref{thm:lower} is
the integrated information of a system the theory recognizes as a subject: a complex,
not merely a set of units. To our knowledge this is the first lower bound on $\Phi$ for a
family of growing size, and the first substrate shown---rather than computed at small
sizes---to satisfy every postulate of IIT~4.0, exclusion included.

For orientation, the general ceiling of~\cite{zaeemzadeh2024}, formalized in the imported
library~\cite{integ-repo}, gives
$\Phi_{\max} = O\bigl(N^{2}\,2^{(2^{N})}\bigr)$ for every substrate on $N$ units. The
present family therefore sits strictly inside the known interval, at
$2^{\Theta(N^2)}$ rather than at the ceiling.

\section{The damage model}
\label{sec:damage}

The system level is where exclusion is decided, and the following reduction is what makes
it tractable. It replaces the nested optimizations of $\phis$ by a count.

Fix a system partition $\theta$ of a candidate set $S$. For a unit $j \in S$ let its
\emph{live inputs} be the units $z+1, z+2$ that it reads and that lie in $S$ (units
outside $S$ are held at the background and contribute nothing). Say $j$ is
\emph{damaged} by $\theta$ when some live input of $j$ lies in the cut set of $j$'s own
block, and write $\mathrm{dmg}(\theta)$ for the number of damaged units.

\begin{proposition}[Damage reduction]
\label{prop:damage}
Let $S = \ZN$ and let the state be all-ones, with $3 \nmid N$ and $N \ge 4$. Then for
every system partition $\theta$,
\[
  \phis(\theta) \;=\; \mathrm{dmg}(\theta),
\]
and consequently
\[
  \phis(\Rring, \mathbf{1}) \;=\;
  \max\Bigl\{ \mathrm{dmg}(\theta) \;:\;
  \theta \in \operatorname*{arg\,min}_{\vartheta}
  \frac{\mathrm{dmg}(\vartheta)}{\mathcal{N}(\vartheta)} \Bigr\}.
\]
\end{proposition}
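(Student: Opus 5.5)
We unfold the definition of $\phis(\theta)$ from IIT~4.0 and evaluate it on the deterministic, bijective dynamics of $\Rring$ at $\mathbf{1}$. Recall that $\phis(\theta)=\min\{\phis^{e}(\theta),\phis^{c}(\theta)\}$. Each side is the intrinsic difference between the unpartitioned transition probability of the system's specified effect (respectively cause) state and the partitioned one, weighted by the selectivity factor $p(\text{state})\log_2(p/p^\theta)$ in the form formalized in~\cite{integ-repo}.

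\emph{Effect side.} Since $F_N(\mathbf{1})=\mathbf{1}$ and the dynamics are deterministic, the unpartitioned effect repertoire is a point mass at $\mathbf{1}$, so the maximizing effect state is $\mathbf{1}$ with probability $1$. Under $\theta$, each unit $j$ whose live input lies in the cut set of its block has that input replaced by a uniformly marginalized bit. Because $j$ computes an exclusive-or, one uniformly random summand makes $s'_j$ uniform. An undamaged unit keeps all its live inputs, and its own input $j$ is never cut from itself. Hence $s'_j$ remains $1$ with probability $1$. Factorization over units then gives $p^\theta(\mathbf{1})=2^{-\mathrm{dmg}(\theta)}$ and $\phis^{e}(\theta)=1\cdot\log_2 2^{\mathrm{dmg}(\theta)}=\mathrm{dmg}(\theta)$. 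I would check against the formalized definition that the selectivity factor equals $1$ here and that the background convention for $S=\ZN$ is vacuous.

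\emph{Cause side.} This is where Lemma~\ref{lem:ring-basic}(ii) enters. Since $F_N$ is a bijection, the uniform prior pushes forward to uniform, and Bayes inversion of a deterministic bijection puts the cause repertoire as a point mass on $F_N^{-1}(\mathbf{1})=\mathbf{1}$. The partitioned cause probability of $\mathbf{1}$ is computed in the same way. The factorized partitioned transition reproduces the current state with probability $2^{-\mathrm{dmg}(\theta)}$ from the past state $\mathbf{1}$, and normalizing over the uniform past leaves the ratio unchanged, again by bijectivity. So $\phis^{c}(\theta)=\mathrm{dmg}(\theta)$ as well, and the minimum is $\mathrm{dmg}(\theta)$.

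\emph{Main obstacle.} I expect the main difficulty to be the cause side. IIT~4.0's cause repertoire under a partition is built by a product of per-unit backward marginals, not by a global Bayes inversion, so the identity must be verified unit by unit. The point to establish is that each damaged unit contributes exactly a factor $1/2$ to the ratio, with no cross-terms from shared inputs among neighbouring windows. The natural route is to reduce, via linearity (Lemma~\ref{lem:ring-basic}(i)), to counting solutions of affine systems over $\mathbb{F}_2$. Bijectivity then makes the count of preimages exactly $2^{N-\mathrm{dmg}}$ against $2^{N}$. The hypothesis $N\ge4$ guarantees that the windows $z+1,z+2$ are distinct from $z$ and that no unit reads itself twice.

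\emph{Second display.} Once $\phis(\theta)=\mathrm{dmg}(\theta)$ holds for every $\theta$, the formula for $\phis(\Rring,\mathbf{1})$ follows by substituting into the definition of $\phis$: minimize $\phis(\theta)/\mathcal{N}(\theta)$, then take the maximum of $\phis(\theta)$ over the argmin set. The supremum is a maximum because the set of partitions is finite.
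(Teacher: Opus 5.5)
Your proposal is correct and follows the paper's own route. The effect side is the direct computation in which each damaged exclusive-or unit is flattened to $1/2$ and each undamaged unit keeps probability $1$, giving $2^{-\mathrm{dmg}(\theta)}$; the cause side collapses onto the effect side because bijectivity makes the fibre over $\mathbf{1}$ a single point, so the cause state is $\mathbf{1}$ with weight one and the log ratio is again the partitioned transition probability $2^{-\mathrm{dmg}(\theta)}$ (a minor slip in your contingency paragraph, which the main argument does not need: the past states satisfying the $N-\mathrm{dmg}(\theta)$ independent undamaged equations number $2^{\mathrm{dmg}(\theta)}$, not $2^{N-\mathrm{dmg}(\theta)}$).
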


The effect side of the identity is a direct computation: because the dynamics is
exclusive-or, a unit whose live input has been severed has its effect repertoire
flattened to $1/2$, while an undamaged unit keeps probability $1$; the product over units
gives $2^{-\mathrm{dmg}(\theta)}$, and the logarithm returns the count. The cause side
requires bijectivity: with $3 \nmid N$ the map is invertible, the fibre over the
all-ones state is a single point, and the cause side reduces to the effect side. This is
the point at which the arithmetic condition on $N$ enters.

On a proper subsystem the same computation gives a rescaled statement, which we record
because it is what disposes of the competitors. Let $C_S$ denote the principal submatrix
of $C$ on $S$.

\begin{proposition}[Competitors]
\label{prop:competitor}
For every nonempty proper $S \subsetneq \ZN$ at the all-ones state,
\[
  \phis(S, \mathbf{1}) \;=\; 2^{-\operatorname{corank}(C_S)}\cdot \mathrm{dmgval}(S),
\]
where $\mathrm{dmgval}(S)$ is the maximum of $\mathrm{dmg}$ over the minimizers of
$\mathrm{dmg}/\mathcal{N}$ among partitions of $S$.
\end{proposition}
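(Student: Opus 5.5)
The plan follows the proof of Proposition~\ref{prop:damage}, with two changes on a proper subsystem: the effect and cause sides no longer coincide, and the cause side now picks up a multiplicative factor.

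\emph{Effect side.} Fix a partition $\theta$ of $S$. Each unit $j\in S$ reads $j,j+1,j+2$. The inputs outside $S$ are fixed by the background and, under the conditioning convention for effects, contribute a constant bit to the exclusive-or. The computation is therefore the same as before. A damaged unit has a severed live input, which is replaced by a uniformly random bit, so its repertoire entry is flattened to $1/2$; an undamaged unit keeps probability $1$. The product over units gives $2^{-\mathrm{dmg}(\theta)}$ for the partitioned repertoire, against $1$ for the unpartitioned one, so the effect-side information is exactly $\mathrm{dmg}(\theta)$.

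\emph{Cause side.} Here the marginalization of~\cite[Eqs.~3--4]{iit40} matters. The units outside $S$ are averaged uniformly rather than fixed, so the probability that the past state of $S$ produced the current all-ones state is governed by the affine map $x\mapsto C_S x + (\text{contribution of the outside})$ on $\mathbb{F}_2^{S}$. The unpartitioned cause repertoire is uniform on the fibre of this affine map over the current state $\mathbf{1}|_S$. That fibre is a coset of $\ker C_S$, of size $2^{\operatorname{corank}(C_S)}$, because the outside contributions are averaged and the fibre is nonempty. I would then compare it with the partitioned repertoire, in which the damaged units become independent fair coins. The plan is to show that the relevant ratio has numerator $2^{\mathrm{dmg}(\theta)}$ but is weighted by the fibre probability, which is uniformly $2^{-\operatorname{corank}(C_S)}$ relative to the invertible case. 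This gives $\phic(\theta)=2^{-\operatorname{corank}(C_S)}\mathrm{dmg}(\theta)$. The IIT~4.0 intrinsic-information formula has the form $p\log(p/q)$, with the selectivity factor $p$ in front, and that factor is exactly the kernel size correction. Since it is at most $1$, taking the minimum with the effect side leaves $\phis(\theta)=2^{-\operatorname{corank}(C_S)}\mathrm{dmg}(\theta)$ for every $\theta$.

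\emph{Optimization.} Multiplying every $\phis(\theta)$ by the same positive constant preserves the argmin of $\phis/\mathcal{N}$, and it rescales the supremum over the minimizing set by that same constant. Hence $\phis(S,\mathbf{1})=2^{-\operatorname{corank}(C_S)}\,\mathrm{dmgval}(S)$.

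The main obstacle is the cause-side computation: verifying that the marginalized cause repertoire, under both the partitioned and the unpartitioned mechanism, produces precisely the factor $2^{-\operatorname{corank}(C_S)}$ and nothing depending on $\theta$. I would first do the linear algebra over $\mathbb{F}_2$ as a block decomposition into $\ker C_S$ and a complement. I would then check that cutting an input replaces a row of $C_S$ by noise independently of the kernel, so that the two factors separate. The edge cases $\operatorname{corank}=0$ and $\mathrm{dmg}(\theta)=0$ need separate checks.
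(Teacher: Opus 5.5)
Your architecture is the paper's. The effect side gives $\mathrm{dmg}(\theta)$ by exclusive-or flattening. The cause side is rescaled by the selectivity of a cause state that is uniform on a fibre of size $2^{\operatorname{corank}(C_S)}$. A common positive factor then preserves the argmin and rescales the supremum.

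The gap is in the step you flag as the crux. You say the outside units are averaged uniformly on the cause side, then use the map $x\mapsto C_Sx+b$, and assert that its fibre over $\mathbf{1}_S$ is a nonempty coset of $\ker C_S$ ``because the outside contributions are averaged''. These two descriptions are incompatible, and the fibre claim fails under the averaging reading. If an outside unit is averaged, every unit of $S$ whose window reaches it becomes a fair coin. Its row of $C_S$ stops being a constraint, and it adds nothing to the log ratio even when damaged.

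Take $N=8$ and $S=\ZN\setminus\{7\}$. Here $C_S$ has corank $1$, with kernel spanned by $(1,0,1,1,0,1,1)$. Averaging unit $7$ frees units $5$ and $6$, which leaves five independent constraints on seven bits. The cause repertoire is then uniform on four states rather than two. Moreover $\phic(\theta)$ becomes $\tfrac14$ times the number of damaged units among $0,\dots,4$, which is not proportional to $\mathrm{dmg}(\theta)$.

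The formula with the full principal submatrix is a statement about the \emph{conditioned} map, with outside inputs held fixed as in the damage model. Even then nonemptiness is not automatic. In the same example the rows of $C_S$ satisfy $r_0+r_1+r_3+r_4+r_6=0$, so $C_Sx=\mathbf{1}_S+b$ is solvable only if unit $7$ is held at $1$. The missing idea is where ``at the all-ones state'' enters: with the outside held at $1$, $\mathbf{1}_S$ itself lies in the fibre, since every window has three units and $1\oplus1\oplus1=1$ (Lemma~\ref{lem:ring-basic}(iii)).

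Your plan for handling the obstacle also points the wrong way. No block decomposition into $\ker C_S$ and a complement is needed. The hoped-for separation in fact fails in general if you compare normalized (Bayes-inverted) partitioned and unpartitioned cause repertoires, as your ``$p\log(p/q)$'' phrasing suggests. The log ratio would then be $\operatorname{rank}C_S-\operatorname{rank}(\text{undamaged rows})$. That is strictly below $\mathrm{dmg}(\theta)$ whenever the damaged rows are dependent modulo the undamaged ones, which can happen only when $C_S$ is singular, precisely the case at issue.

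What makes the statement work is the form of the cause side. The selectivity $\pi_c(\bar x\mid\mathbf{1}_S)$ multiplies the logarithm of a ratio of \emph{unnormalized forward} probabilities, $p(\mathbf{1}_S\mid\bar x)/p^\theta(\mathbf{1}_S\mid\bar x)$. At any $\bar x$ in the fibre the numerator is $1$. The denominator is $2^{-\mathrm{dmg}(\theta)}$, because undamaged units evaluate their unchanged exclusive-or, which equals $1$ there, and damaged units are fair coins.

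So the log ratio equals $\mathrm{dmg}(\theta)$ at every fibre point, and ties among maximal cause states are harmless. The factor $2^{-\operatorname{corank}(C_S)}$ comes from the unpartitioned inversion alone. With these two corrections the rest of your argument goes through.
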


The factor $2^{-\operatorname{corank}}$ is the size of the fibre of the conditioned map,
and is the only way the state enters: at corank $0$ the map is bijective and
$\phis(S,\cdot)$ is the same in every state. Two consequences are used below. First, subtractive
corrections to the damage count are \emph{false}: we record this because it is the
natural guess and it fails. Second, a singular subsystem is penalized by a factor of at
least two, which is what makes the crude bound of Lemma~\ref{lem:crude} sufficient
against a large class of candidates.

\section{Exclusion}
\label{sec:exclusion}

Theorem~\ref{thm:exclusion} has two halves: the whole cycle achieves $\phis \ge 4$, and
every proper subsystem stays strictly below $4$. Both are proved by classifying
partitions according to how much damage they do.

\paragraph{The whole cycle: $\phis \ge 4$.} By Proposition~\ref{prop:damage} it suffices
to exhibit a partition with a small ratio and to show that no partition of damage at
most three can match it. The classification is where the work lies, and it proceeds by
the damage budget.

\begin{lemma}[$D=1$]
\label{lem:d1}
A partition of $\ZN$ damaging exactly one unit has $\mathcal{N}(\theta) = N-1$.
\textup{(\texttt{sysCutCount\_eq\_of\_damagedCount\_one})}
\end{lemma}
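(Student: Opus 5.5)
The plan is to reduce the statement to a short combinatorial case analysis, using only the shape of a system partition and the definition of damage; no information-theoretic quantity enters. I will use the IIT~4.0 convention for system partitions as formalized: each block $p$ of $\theta$ either has its inputs from the rest of the system left intact, so $\mathrm{cutSet}(p)=\emptyset$, or has all of them severed, so $\mathrm{cutSet}(p)=\ZN\setminus p$. The whole argument rests on this dichotomy, and the first step is to confirm it against the library definition. If cut sets could be arbitrary subsets of the complement, the claim would fail: cutting a single unit $x$ into a large arc $p\ni x-1$, chosen so that no other unit of $p$ reads $x$, damages only $x-1$ and gives $\mathcal{N}=|p|$. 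Under the dichotomy, $\mathcal{N}(\theta)=\sum_{p\ \text{cut}}|p|\,(N-|p|)$, and the unit $j$ is damaged exactly when $j$ lies in a cut block $p$ and one of $j+1,j+2$ lies outside $p$.

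\emph{Step 1: every cut block contributes at least one damaged unit.} Let $p$ be a cut block. It is a nonempty proper subset of $\ZN$, since a partition that severs something cannot consist of a single block with nothing outside it. Decompose $p$ into maximal cyclic runs. The last unit $b$ of any run has $b+1\notin p$, so $b$ is damaged. Damaged units in distinct blocks are distinct, so if $\mathrm{dmg}(\theta)=1$ there is exactly one cut block $p$, and $p$ consists of a single run $[a,b]$.

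\emph{Step 2: that run is a singleton.} Suppose the run has length at least $2$. Then $b-1\in p$, and its live input $b+1$ lies outside $p$. So $b-1$ is damaged as well as $b$, giving $\mathrm{dmg}(\theta)\ge 2$, a contradiction. Hence $p=\{b\}$. Its only live inputs $b+1,b+2$ lie outside it (here $N\ge 3$), so the damage is indeed exactly one.

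\emph{Step 3: compute $\mathcal{N}$.} All other blocks have empty cut sets and contribute $0$. Therefore $\mathcal{N}(\theta)=1\cdot(N-1)=N-1$.

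The only real obstacle is Step~1's reliance on the all-or-nothing shape of cut sets, together with confirming that the whole, uncut partition is excluded from the set of system partitions. In the formal development I would expect the main cost to be the cyclic bookkeeping rather than the mathematics: extracting a "last unit of a run" from an arbitrary proper nonempty $p\subseteq\ZN$. I would handle this by taking any $j\in p$ with $j+1\notin p$, which exists because $p$ is proper and nonempty. Then, if $|p|\ge2$, I would show that either $j-1\in p$ (which is damaged, as shown above) or some other element of $p$ also has its successor outside $p$. Either way this produces a second damaged unit, without ever formalizing runs.
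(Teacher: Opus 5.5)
Your argument is correct and is essentially the paper's. The paper gives no prose proof and defers to the Lean development, but its sketch is exactly this mechanism: a block that severs its inputs has its complement as cut set and so damages at least one of its own units, a block with two or more units damages at least two, and counting then forces a single singleton cut block, whence $\mathcal{N}(\theta)=1\cdot(N-1)$. The all-or-nothing shape of cut sets that you flag as the point to check is precisely the fact the paper's sketch takes as given.
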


\begin{lemma}[$D=2$]
\label{lem:d2}
A partition of $\ZN$ damaging exactly two units has either
$\mathcal{N}(\theta) = 2(N-1)$, the case of two singleton inputs blocks, or
$4\,\mathcal{N}(\theta) \le N^{2}$, the case of one inputs block and its complement.
\textup{(\texttt{sysCutCount\_of\_damagedCount\_two})}
\end{lemma}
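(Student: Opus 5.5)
The proof combines a local count of damage, arc by arc, with a case split on how the damage budget of two is spent across blocks. I will work with the formal notion of system partition as I understand it. Each block $p$ of $\theta$ either has its inputs cut, in which case $\mathrm{cutSet}(p) = \ZN \setminus p$, or has empty cut set. A block with empty cut set contributes nothing to $\mathcal{N}(\theta)$ and damages none of its units. If the formal definition instead allows $\mathrm{cutSet}(p)$ to be a proper subset of the complement, the argument below still goes through verbatim for the second alternative. The first alternative would then need equality to be re-checked against that definition.

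\emph{Step 1: damage inside a single inputs block.} Fix a block $p$ with $\mathrm{cutSet}(p) = \ZN\setminus p$ and $\emptyset \ne p \subsetneq \ZN$. A unit $j\in p$ is damaged exactly when $j+1\notin p$ or $j+2\notin p$, since the self-input $j$ is never cut. Decompose $p$ into its maximal cyclic arcs. For an arc $\{a,\dots,b\}$ of length $k$:
\begin{itemize}
\item the last unit $b$ is damaged, because $b+1\notin p$;
\item if $k\ge 2$, the penultimate unit $b-1$ is also damaged, because $b+1\notin p$;
\item every earlier unit $j$ has $j+1, j+2$ in the arc and is undamaged.
\end{itemize}
Units reached across a gap are never counted twice, because damage is attributed to $j$ and $j$ lies in exactly one arc. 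Hence an arc contributes $1$ if it is a singleton and $2$ otherwise, so every inputs block contributes at least $1$ to $\mathrm{dmg}(\theta)$.

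\emph{Step 2: case split on $\mathrm{dmg}(\theta)=2$.} By Step 1 there are at most two inputs blocks.
\begin{itemize}
\item \emph{Two inputs blocks.} Each contributes exactly $1$, so each is a single singleton arc, that is, a singleton block $\{j\}$. Each then contributes $1\cdot(N-1)$, giving $\mathcal{N}(\theta)=2(N-1)$.
\item \emph{One inputs block $p$.} It contributes $2$, so $p$ is either one arc of length at least $2$ or a union of two singleton arcs. In either sub-case $\mathcal{N}(\theta)=|p|\,(N-|p|)$. By AM--GM this is at most $N^2/4$, which gives $4\,\mathcal{N}(\theta)\le N^2$.
\end{itemize}
No further configurations arise, since the remaining blocks have empty cut set and contribute nothing to either count.

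\emph{Main obstacle.} The delicate part is Step 1, namely the exact per-arc damage count on the cycle. The hypothesis $N\ge 8$ (indeed $N\ge 3$ already) rules out degenerate wrap-around, where $j+2$ would coincide with $j$ or an arc would meet itself. In the formalization I would prove Step 1 as a standalone lemma. It would express $\mathrm{dmg}$ restricted to a block as the sum, over its maximal arcs, of $\min(2,\text{length})$, proved by induction on the number of arcs, or equivalently by counting the units $j\in p$ with $j+1\notin p$ plus those with $j+1\in p,\ j+2\notin p$. The case split of Step 2 is then a finite bookkeeping argument on that identity.
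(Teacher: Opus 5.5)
\textbf{There is a genuine gap.} It sits where you hedged: the model of system partitions. Your Step~1 count is correct: an inputs block damages $\min(2,\text{length})$ units per maximal arc, which is the first observation of the paper's sketch. Step~2, however, assumes that every block other than an inputs block has empty cut set, and that is not the formal setting. The partitions are directional. A block's cut set also contains every other block whose \emph{outputs} are severed, so cut sets that are proper nonempty subsets of the complement occur. The paper's own witness $\vartheta^\ast$ needs them. Under your model, at $N=8$ the arcs of sizes $2,3,3$ each damage $2$ when their inputs are cut. Damage at most four then allows at most two inputs blocks, so $\mathcal{N}\le 15+15=30<32$. This contradicts the stated inequality $N^2<2\,\mathcal{N}(\vartheta^\ast)$.

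Your claim that the second alternative ``still goes through verbatim'' once partial cut sets are allowed is false. Step~1 says nothing about blocks with partial cut sets, and such blocks can add to $\mathcal{N}$ while damaging none of their own units. Take arcs $A,B,C$ in cyclic order, each of length at least $2$, with $\mathrm{cutSet}(A)=B\cup C$, $\mathrm{cutSet}(C)=B$ and $\mathrm{cutSet}(B)=\emptyset$; this is inputs of $A$ and outputs of $B$ severed.
\begin{itemize}
\item Only the last two units of $A$ are damaged, because the units of $C$ read only $C\cup A$.
\item Yet $\mathcal{N}=|A||B|+|B||C|+|C||A|$, which equals $21$ for sizes $3,3,2$ at $N=8$.
\item This exceeds both $2(N-1)=14$ and $N^2/4=16$.
\end{itemize}
The lemma can hold only because the definition excludes this pattern. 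In IIT~4.0 every part severs its inputs, its outputs, or both, so $C$ cannot be left uncut. Either direction for $C$ severs a second reading boundary and raises the damage to $4$.

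A correct proof must use that constraint. It must also use the fact that severing a block's outputs damages the units immediately preceding each of its arcs, which lie in \emph{other} blocks, so damage is not a sum of your per-block counts. With both facts one shows that a damage-two partition is either two singleton inputs blocks or a bipartition severed in one direction only. That is the ``counting of the blocks that can afford to be undamaged'' in the paper's sketch, and it is missing from your argument.
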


\begin{lemma}[$D=3$]
\label{lem:d3}
For $N \ge 8$, a partition of $\ZN$ damaging exactly three units satisfies
$4\,\mathcal{N}(\theta) \le (N-1)^2 + 8(N-1)$.
\textup{(\texttt{sysCutCount\_of\_damagedCount\_three'})}
\end{lemma}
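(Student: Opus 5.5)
The plan is to reduce the statement to a per-block inequality and then optimize over how the damage budget of three is split among blocks. Write $\theta$ as a family of blocks $p$ with cut sets $c_p=\mathrm{cutSet}(p)$, and let $d_p$ be the number of units of $p$ damaged through $c_p$, so that $\sum_p d_p = 3$ and $\mathcal{N}(\theta)=\sum_p |p|\,|c_p|$. Since $S=\ZN$, every input is live. A unit $j\in p$ is damaged exactly when $j+1\in c_p$ or $j+2\in c_p$. The undamaged units of $p$ therefore avoid the set $R(c_p)=(c_p-1)\cup(c_p-2)$ of units reading into $c_p$. This gives
\[
  |p| \;\le\; N - |R(c_p)\setminus p| \qquad\text{with}\qquad |R(c_p)\cap p| \le d_p .
\]
I assume, as the $D=1$ and $D=2$ lemmas suggest, that the formal definition makes a block's cut set disjoint from the block, and that it forces $d_p\ge1$ whenever $c_p\neq\emptyset$. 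I would first check this against the Lean definition, since a block with a nonempty but harmless cut set would inflate $\mathcal{N}$.

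\textbf{Key combinatorial fact on the cycle.} For a nonempty proper $c\subsetneq\ZN$, $|R(c)|\ge |c|+1$. I would prove this by decomposing $c$ into maximal arcs: an arc $[a,b]$ contributes the arc $[a-2,b-1]$, of length one more, and the shadows of different arcs overlap only in a controlled way. Combining this with the displayed constraint yields
\[
  |p| \;\le\; N-|c_p|-1+d_p, \qquad\text{hence}\qquad |p|\,|c_p| \;\le\; |p|\,(N-1+d_p-|p|).
\]
This block bound I expect to be the main obstacle. The difficulty is that $R(c_p)$ may intersect both $c_p$ itself and other blocks, so the counting must be done on $\ZN\setminus c_p$ rather than naively. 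I would first try to prove the sharper form $|p|+|c_p| \le N-1+d_p$ directly, by arguing that the units of $\ZN\setminus c_p$ immediately preceding each arc of $c_p$ lie in $R(c_p)$. Each such unit is therefore either damaged or outside $p$. In particular, at least one unit of $\ZN\setminus(p\cup c_p)$ exists unless $d_p$ absorbs it.

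\textbf{Optimization over the split of the budget.} With the block bound in hand, I split into the cases $(3)$, $(2,1)$ and $(1,1,1)$ for the multiset of nonzero $d_p$, using that the blocks are disjoint, so $\sum|p|\le N$.
\begin{itemize}
\item Case $(3)$: a single block gives $\mathcal{N}\le \max_x x(N+2-x)\le (N+2)^2/4$. Since $(N-1)^2+8(N-1)=(N+3)^2-16$, the target follows as soon as $(N+2)^2\le (N+3)^2-16$, i.e.\ $N\ge 6$.
\item Case $(1,1,1)$: I would invoke the $D=1$ analysis block by block. A block with $d_p=1$ is a singleton whose cut set has size at most $N-1$, giving $\mathcal{N}\le 3(N-1)$, which is below the target for $N\ge 8$.
\item Case $(2,1)$: combine a $D=2$-type block, bounded by $x(N+1-x)$, with a singleton contribution of at most $N-1$ on the remaining units. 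Maximizing under $x\le N-1$ gives about $(N+1)^2/4+(N-1)$, and a short check shows this is at most $\bigl((N-1)^2+8(N-1)\bigr)/4$.
\end{itemize}
The hypothesis $N\ge 8$ is used only to absorb the small-case slack in these final comparisons.
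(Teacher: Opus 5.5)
\textbf{There is a genuine gap.} Your reduction is sound for blocks whose inputs are cut. There $c_p=\ZN\setminus p$, and the damaged units are exactly the last one or two units of each maximal arc of $p$. So $d_p\ge 1$, and $d_p=1$ only for singletons.

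The gap is the assumption you flagged but did not discharge: that $c_p\neq\emptyset$ forces $d_p\ge1$. Your $(2,1)$ and $(1,1,1)$ cases also rely on its corollary, that a block with $d_p=1$ is a singleton. System partitions in IIT~4.0 are directional, so a block may have only its outputs cut. Its cut set is then the union of the \emph{other} output-cut blocks, and nothing forces it to read from them.

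For $N=8$, take $A=\{0,1,2\}$ with inputs cut, and $C=\{3\}$, $B=\{4,5,6,7\}$ with outputs cut. Then $c_B=\{3\}$ and the damaged units are exactly $1,2,3$. But $d_B=0$, and $B$ contributes $4$ to $\mathcal{N}=15+4+4$, a term your case analysis never sees.

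Similarly, $A=\{0\}$ (inputs cut) with $B=\{1\}$ and $C=\{2,\dots,N-1\}$ (outputs cut) damages exactly $0,1,N-1$. Here $d_C=1$ while $|C|=N-2$. The $D=1$ lemma concerns whole partitions and cannot be applied block by block.

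Once such blocks are admitted, your per-block inequality cannot finish the job. An undamaged block only satisfies $|p|\,|c_p|\le |p|(N-1-|p|)$, and nothing in your plan stops several such blocks from each contributing on the order of $N^2/4$. What is missing is a cross-block argument bounding $\sum |p|\,|U\setminus p|$ over output-only blocks, where $U$ is the union of the output-cut blocks. It would use that an output-only block is damaged wherever it reads another output-cut block, so that a budget of three constrains how the output-cut blocks can sit on the cycle. This is the step the paper's sketch calls ``counting the blocks that can afford to be undamaged'', and it is where the content of the lemma lies.

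Separately, your ``key combinatorial fact'' does no work where you apply it. For $d_p\ge1$ the bound $|p|+|c_p|\le N-1+d_p$ is weaker than the disjointness bound $|c_p|\le N-|p|$, which you already assume. It says something new only when $d_p=0$, which is exactly the case you excluded.

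Using it also breaks case $(2,1)$. We have $(N+1)^2+4(N-1)=N^2+6N-3$, which exceeds $(N-1)^2+8(N-1)=N^2+6N-7$. Taking the integer maximum of $x(N+1-x)$ does not help in either parity, so the ``short check'' is false as stated. With $x(N-x)$ in place of $x(N+1-x)$ that case closes at $N^2/4+N-1$. However, this repairs only the part of the argument that deals with input-cut blocks.
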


\noindent The three proofs share a mechanism. A block that severs its inputs has
its own complement as cut set, hence damages at least one of its own units; if it has
two units it damages two; and if it is neither small nor almost everything it damages
three. Counting the blocks that can afford to be undamaged then forces the shape of the
partition, and the normalizer follows. We state the three lemmas without proof: each is a
case analysis of a size that makes a prose rendering unilluminating, and the proof is the
Lean development, whose corresponding theorem is named after each statement.

Against these three floors we set a single witness: the partition
$\vartheta^{\ast}$ of $\ZN$ into three arcs of sizes $a, k, k$ with $k = \lfloor (N+1)/3
\rfloor$ and $a = N - 2k$, which damages at most four units. A computation
(\texttt{sysCutCount\_threeArc}) shows that its normalizer is large enough to beat each
floor \emph{strictly}:
\[
  4(N-1) < \mathcal{N}(\vartheta^{\ast}), \qquad
  N^{2} < 2\,\mathcal{N}(\vartheta^{\ast}), \qquad
  (N-1)^{2} + 8(N-1) < 3\,\mathcal{N}(\vartheta^{\ast}),
\]
the three inequalities saying exactly that $\mathrm{dmg}(\vartheta^{\ast}) /
\mathcal{N}(\vartheta^{\ast})$ is smaller than the least ratio available at damage one,
at damage two and at damage three respectively. Hence no partition damaging three or
fewer units minimizes the ratio, so every minimizer damages at least four, and the
supremum over the minimizing set in Proposition~\ref{prop:damage} is at least four. This
is \texttt{four\_le\_sysPhi\_circNet}. Note where the weight falls: the conclusion is not
that the ratio is monotone in the damage, which is false, but that a single explicit
partition undercuts all three floors at once; and the passage from ``every minimizer
damages at least four'' to ``$\phis \ge 4$'' is exactly the supremum-over-argmin
convention of the definition, which is therefore load-bearing here.

\paragraph{Proper subsystems: $\phis < 4$.} Here the geometry of the cycle does the work,
and a deletion is worth more than it looks. A candidate $S$ with $|S| \le 1$ has
$\phis(S,s) = 0$, so assume $|S| \ge 2$ and choose a deleted unit $d \notin S$ whose
predecessor $d-1$ lies in $S$. Partition $S$ into $P = S \cap A$ and $S \setminus P$,
where $A$ is the arc of length $T$ running backwards from $d-1$, with $P$ severing its
inputs. A unit $j \in P$ is damaged only if a live input of $j$ lies outside $P$; but the
inputs of a unit of $A$ are its two successors along the arc, which remain in $P$ except
at the head, and the unit at the head reads $d$, which is deleted and therefore no longer
a live input at all. The deletion, in other words, does not merely cut a connection: it
removes one, so the head is not damaged either.

\begin{lemma}[Shielded witness]
\label{lem:shield}
For $N \ge 3$, $d \notin S$ with $d-1 \in S$, and $1 \le T \le N-2$ with $P$ and
$S \setminus P$ both nonempty, the partition above damages exactly one unit of $S$.
\textup{(\texttt{damagedCountOn\_shield})}
\end{lemma}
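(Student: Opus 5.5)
Since the partition has exactly two blocks, $P$ and $S\setminus P$, the plan is to check damage unit by unit inside each block, using only the definitions of damage and live inputs. First I will fix the setup. $A = \{d-T, d-T+1, \dots, d-1\}$ is the arc of $T$ consecutive units ending at $d-1$, and $P = S \cap A$. Because $T \le N-2$, the arc $A$ omits $d$ and at least one further unit, so it does not wrap onto itself; its successor structure inside $\ZN$ is therefore that of a path, which is what the argument needs. The cut set of each block is taken to be the complement of the block within $S$ (with $P$ the block that severs its inputs), as in the paper's description that a block severing its inputs has its complement as cut set.

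\textbf{Block $P$: no damage.} Take $j \in P$, so $j = d - t$ with $1 \le t \le T$. Its inputs are $j+1$ and $j+2$. If $j+i \in A$, then $j+i$ is either in $S$, in which case it lies in $P$ and is not in the cut set, or outside $S$, in which case it is not a live input. The only way to leave $A$ going forward is to reach $d$ (when $t\le 2$) or $d+1$ (when $t=1$). The unit $d$ lies outside $S$, so it is never live. The unit $d+1$ is reached only from $j=d-1$ via input $j+2$, and it may lie in $S\setminus P$. This gives a \emph{possible exception} at the head. I will handle it by reading ``live inputs'' as in the paper's shielding sentence: the head's window passes through the deleted unit $d$. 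If the formal definition does not cover this, the plan is to count this unit as the single damaged unit and prove that $S\setminus P$ then contributes zero instead. The first case analysis I would try is the one the paper itself suggests, that the head is undamaged.

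\textbf{Block $S\setminus P$: exactly one damaged unit.} Since $S\setminus P$ does not sever its inputs, damage there means a live input landing in $P$, and since $P$ and $S\setminus P$ partition $S$, this is the same as a live input landing inside $A$. Going backwards from the tail $d-T$ of $A$, the units that read into $A$ are $d-T-1$ and $d-T-2$, both outside $A$. I will use the nonemptiness of $P$ and $S\setminus P$, together with the choice of orientation of the severing direction (the cut is applied to the block that reads backward across the boundary), to show that exactly one unit of $S$ is damaged. The existence of at least one damaged unit comes from nonemptiness: some live connection must cross the boundary between the blocks, because otherwise the cut would sever nothing. The bound of at most one comes from the window having length three together with the shield at $d$.

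\textbf{Main obstacle.} The hard step is pinning down the exact convention for the cut set and for which directed connections are severed, since the count of exactly one depends on it. I would settle this by unfolding the formal definition used in \texttt{damagedCountOn\_shield}. After that, the argument reduces to a finite check of the units adjacent to the two ends of $A$, namely $d-T-2, d-T-1, d-2, d-1$, split into cases by whether each of them lies in $S$.
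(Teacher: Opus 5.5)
\textbf{The gap is in where you place the damaged unit.} In the paper's construction only $P$ severs its inputs, so $\mathrm{cutSet}(P)=S\setminus P$ and $\mathrm{cutSet}(S\setminus P)=\emptyset$. This is forced by the normalizer $\mathcal{N}=|P|\,(|S|-|P|)$ used in Lemma~\ref{lem:crude}. By the definition of damage, no unit of $S\setminus P$ can then be damaged. You say yourself that $S\setminus P$ does not sever its inputs, and then count damage there anyway, which contradicts your own premise.

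Neither half of your ``exactly one'' argument for that block would survive even under a bidirectional convention. First, nonempty blocks need not be joined by any live connection: damage counts actual wiring, while $\mathcal{N}$ counts all pairs, so the cut can sever nothing live. Second, when $d-T$, $d-T-1$, $d-T-2$ all lie in $S$, both $d-T-1$ and $d-T-2$ read $d-T\in P$. So a window of length three gives two damaged units at the tail, not one.

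The unit you flagged and then tried to define away is the entire content of the lemma. For $j=d-t$ with $2\le t\le T$, both inputs lie in $A\cup\{d\}$, hence are in $P$ or outside $S$. The head $d-1$ reads $d$, which is not live, and $d+1$, which is not in $A$ because $T\le N-2$. So the damage is exactly $1$ if $d+1\in S$ and $0$ otherwise.

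Reading ``live inputs'' so as to exclude $d+1$ is not available: it is live whenever it lies in $S$. Your fallback is the correct proof: the head is the only candidate, and $S\setminus P$ contributes nothing because its cut set is empty.

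The paper's shielding sentence also mentions only $d$ and passes over $d+1$. What the argument actually establishes is \emph{at most} one damaged unit, which is all Lemma~\ref{lem:crude} uses. ``Exactly one'' needs the extra hypothesis $d+1\in S$: for $N=8$, $S=\{0,\dots,5\}$, $d=6$, $T=3$, nothing is damaged.
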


\begin{lemma}[Crude competitor bound]
\label{lem:crude}
For every proper $S \subsetneq \ZN$ with $|S| \ge 2$, every background and every state
$s$, $\phis(S,s) < 4$. \textup{(\texttt{sysPhi\_circNet\_lt\_four})}
\end{lemma}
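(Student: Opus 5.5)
The plan is to bound $\phis(S,s)$ from above by exhibiting one partition of $S$ whose ratio $\phis(\theta)/\mathcal{N}(\theta)$ is so small that every minimizer of the ratio must have $\phis(\theta)<4$. The partition is the shielded witness of Lemma~\ref{lem:shield}. Since $S$ is proper and $|S|\ge 2$, there is a unit $d\notin S$ with $d-1\in S$: walk backwards around the cycle from any unit not in $S$ until the first unit of $S$ is met. We then choose $T$ so that $P=S\cap A$ and $S\setminus P$ are both nonempty. If $S$ meets at least two units of the arc of length $N-2$ ending at $d-1$, taking $T$ maximal works. The degenerate configurations, where $S$ lies inside a tiny window, are treated separately: $S$ then has at most two or three units, and every partition damages few units.

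The key step is to relate $\phis(\theta)$ to damage for an arbitrary state and background. I would first prove the analogue of Proposition~\ref{prop:damage} on the effect side for a general state. Exclusive-or dynamics flatten each damaged unit's effect repertoire to $1/2$ whatever the state, so the effect term of any partition is at most $\mathrm{dmg}(\theta)$ bits. The cause side is then handled by the fibre argument behind Proposition~\ref{prop:competitor}: conditioning on the state only shrinks the value by the factor $2^{-\operatorname{corank}}$, and is in any case bounded by the effect count. Together these give $\phis(\theta)\le \mathrm{dmg}(\theta)$ for every partition of $S$, in every state. In particular the shielded witness $\theta_0$ has $\phis(\theta_0)\le 1$.

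Next I would compare ratios. The normalizer of $\theta_0$ is $\mathcal{N}(\theta_0)=|P|\,|S\setminus P|+|S\setminus P|\,|P|$, at least of order $2(|S|-1)$. I would show that any partition $\theta$ with $\phis(\theta)\ge 4$ has ratio at least $4/\mathcal{N}(\theta)$. Here I would use the trivial bound $\mathcal{N}(\theta)\le |S|(|S|-1)$ together with the finer structure of $\theta_0$, and choose $T$ near $|S|/2$ along the arc so that $\mathcal{N}(\theta_0)$ is about $|S|^2/2$. This would give $1/\mathcal{N}(\theta_0)<4/\mathcal{N}(\theta)$, so no minimizer has $\phis\ge 4$. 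Since $\phis(S,s)$ is the supremum of $\phis$ over the minimizers, it follows that $\phis(S,s)<4$. If the balanced choice of $T$ is unavailable, I fall back on the corank penalty: a singular $C_S$ halves the value by Proposition~\ref{prop:competitor}, so the damage count of a minimizer would have to reach $8$.

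I expect the ratio comparison to be the main obstacle. A partition with large normalizer and damage exactly $4$ could in principle beat the witness. The fix I would try first is to choose $T$ maximizing $|P|\,|S\setminus P|$ among the shield-admissible values. Failing that, I would mirror the damage-budget case analysis of Lemmas~\ref{lem:d1}--\ref{lem:d3}, relativized to $S$, showing that any partition achieving four damaged units also pays a correspondingly larger cost in the ratio.
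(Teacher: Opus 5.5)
Your skeleton is the paper's proof: the balanced shielded witness $\theta_0$, the bound $\phis(\theta)\le\mathrm{dmg}(\theta)$, and the trivial estimate $\mathcal{N}(\theta)\le|S|(|S|-1)$. As written, though, it does not close, because the witness's normalizer is miscounted. In the shielded witness only $P$ severs its inputs, so $\mathcal{N}(\theta_0)=|P|\,|S\setminus P|$, not twice that. If $S\setminus P$ also severed, the units of $S\setminus P$ just beyond the tail of $A$, which read into $P$, could be damaged too, and the damage-one count of Lemma~\ref{lem:shield} would no longer apply. Hence $\mathcal{N}(\theta_0)\le|S|^2/4$. The inequality you need, $|S|(|S|-1)<4\,\mathcal{N}(\theta_0)$, then reads $(|S|-2|P|)^2<|S|$, so there is no factor-two margin. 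Choosing ``$T$ near $|S|/2$'' is not enough: at $|S|=4$, $|P|=1$ gives equality, i.e.\ a possible tie with a partition having $\phis=4$ and $\mathcal{N}=12$. You must take $|P|=\lfloor|S|/2\rfloor$ exactly, which gives $4\,\mathcal{N}(\theta_0)\ge|S|^2-1>|S|(|S|-1)$.

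The step you leave open, whether such a $T$ exists, is the one you actually need, and it always does. Write $A_T=\{d-T,\dots,d-1\}$. The count $|S\cap A_T|$ rises by at most one per step, equals $1$ at $T=1$, and is at least $|S|-1$ at $T=N-2$, because $A_{N-2}$ omits only $d\notin S$ and $d+1$. So every value in $[1,|S|-1]$ occurs and there are no degenerate configurations. Your claim that maximal $T$ works is false: if $d+1\notin S$ it gives $P=S$.

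Once this is in place, the ``main obstacle'' you anticipate disappears. Any $\theta$ with $\phis(\theta)\ge4$ has ratio at least $4/(|S|(|S|-1))$, which is strictly greater than $1/\mathcal{N}(\theta_0)$, itself at least the ratio of $\theta_0$. So no such $\theta$ is a minimizer, whatever its damage. Equivalently, $\phis(S,s)\le|S|(|S|-1)/\mathcal{N}(\theta_0)\le4|S|/(|S|+1)<4$.

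Drop the fallbacks. The relativized $D\le3$ analysis is unnecessary. The corank penalty is unavailable, since Proposition~\ref{prop:competitor} is stated only at $\mathbf{1}$ while the lemma is claimed for every state. Your cause-side discussion is likewise superfluous, since $\phis(\theta)\le\phie(\theta)\le\mathrm{dmg}(\theta)$ holds in every state and background.
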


\noindent Choosing $T$ so that $|P| = \lfloor |S|/2 \rfloor$ balances the two blocks, so
$\mathcal{N} = |P|\,(|S| - |P|) \ge (|S|^{2}-1)/4$ while the damage is $1$. Since
$\phis(S,s)$ is at most $|S|(|S|-1)$ times the minimal ratio, and that ratio is at most
$1/\mathcal{N}$,
\[
  \phis(S,s) \;\le\; \frac{|S|(|S|-1)}{(|S|^{2}-1)/4} \;=\; \frac{4|S|}{|S|+1} \;<\; 4 ,
\]
which is the bound required, with no case left over: $S$ need not be an arc, since only
the block $P$ was chosen to be one. The two halves together give
Theorem~\ref{thm:exclusion}. We emphasize that the theorem is proved for every state
and every background: by Proposition~\ref{prop:competitor} the values involved do not
depend on the state when the relevant submatrices are invertible, and when they are
singular the state can only decrease $\phis(S,s)$.

\section{The distinctions of the cycle}
\label{sec:arcs}

We now turn to the mechanism level. Call $M \subseteq \ZN$ an \emph{arc} when it is a set
of consecutive residues.

\begin{proposition}[Arcs are distinctions]
\label{prop:arcsare}
Let $N \ge 3$ with $3 \nmid N$ and let the state be all-ones. Every arc of length
$L \ge 3$ is a distinction of $\Rring$, the arc saturating to the whole cycle once
$L \ge N$. Moreover its maximally irreducible cause purview contains the arc itself, so
that the support of the distinction contains every unit of $M$ in its stated state.
\end{proposition}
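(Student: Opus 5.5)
We need to show that an arc $M$ of length $L\ge 3$ at the all-ones state has $\phic(m)>0$ and $\phie(m)>0$ for its maximally irreducible purviews, and that the maximal cause purview contains $M$. The plan is to compute both sides explicitly, using linearity (Lemma~\ref{lem:ring-basic}(i)) and bijectivity (Lemma~\ref{lem:ring-basic}(ii)). Everything is an XOR computation on a small window, so repertoires are either uniform or point masses on affine subspaces. For a deterministic linear map, both the cause and the effect repertoire of $M$ over a purview $Z$ are uniform on an affine subspace. We will therefore express every relevant quantity (intrinsic information and its partitioned counterpart) as differences of $\mathbb F_2$-ranks of submatrices of $C$ and of $C^{-1}$.

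\emph{Effect side.} The units constrained by $M=\{i,\dots,i+L-1\}$ are those $z$ with $W_z\subseteq M$, namely $z=i,\dots,i+L-3$, together with partially constrained units $z=i-2,i-1,i+L-2$. Fully constrained units are determined, while a unit with only part of its window in $M$ is uniform because the outside inputs are marginalized and XOR with a uniform bit is uniform. We take as candidate effect purview $Z_e=\{i,\dots,i+L-3\}$, nonempty since $L\ge3$, and show that every bipartition $\{M_1,M_2\}$ of $M$ (with the corresponding purview parts) leaves some $z\in Z_e$ whose window straddles the cut. Such a $z$ then goes from probability $1$ to $1/2$, so $\phie>0$. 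The key point is that the windows of $Z_e$ form a chain of overlapping triples covering $M$; any cut of $M$ into two nonempty sets must separate two consecutive units of $M$, and some window containing both lies in $Z_e$. We must also handle partitions that cut the purview but not the mechanism, which are disposed of the same way. The maximizing effect purview may be larger, but positivity is all we need on this side.

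\emph{Cause side.} Here we use $3\nmid N$. The cause repertoire of $M$ over a purview $Z$ is obtained by conditioning the uniform prior on $(Cs)_M=\mathbf 1_M$. The constraint on $s$ involves the units $i,\dots,i+L+1$, i.e.\ $L$ independent linear equations in $L+2$ variables; they are independent because $C_{M,\cdot}$ restricted to these columns is upper triangular with unit diagonal. The purview $Z=M$ should be optimal, or at least contained in the optimum. The plan is to compute the intrinsic cause information for purviews $Z\supseteq M$ versus $Z\not\supseteq M$ and show that dropping any unit of $M$ strictly lowers the $\phic$ value (or its tie-breaking order). Each unit of $M$ is then pinned in the all-ones state when marginalizing over the two trailing units. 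Once $L\ge N$ the arc is $\ZN$ and bijectivity makes the cause a point mass on $\mathbf 1$, which gives the saturation claim. Irreducibility on the cause side follows the same chain argument with $C$ replaced by the triangular system above.

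The main obstacle is the cause-side maximality claim. The assertion is not merely that $\phic>0$ but that \emph{every} maximally irreducible cause purview contains $M$, which requires comparing $\phic$ over all purviews, including non-contiguous ones and ones extending past $i+L+1$. My first approach is a rank formula: for linear deterministic dynamics, $\phic(Z)$ equals a rank difference minus a correction from the least-damaging partition. I would then show this quantity is monotone under adding units of $M$ and strictly smaller whenever a unit of $M$ is missing, reducing the comparison to a finite local pattern near the ends of the arc that is independent of $N$. If monotonicity fails, I would fall back on case analysis by which ends of $M$ are missing, which is the part I would expect to delegate to the Lean development.
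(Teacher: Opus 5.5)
Your effect side is the paper's argument: the windows of $D(M)$ chain across $M$, so every partition severs some window and $\phie>0$.

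\textbf{The gap.} The gap is the cause-side ``moreover'' clause. You identify it as the main obstacle and then leave it to an unexecuted rank/monotonicity comparison, with a fallback to case analysis. The missing idea is that no comparison across purviews is needed.

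In IIT~4.0, and in the formalization used here, the cause side of $M$ over $Z$ is built from per-unit factors $p(m_j\mid z)$, $j\in M$, with every input outside $Z$ causally marginalized. It is \emph{not} the uniform prior conditioned on $(Cs)_M=\mathbf 1_M$ and then marginalized to $Z$. For an exclusive-or unit, $p(m_j\mid z)\equiv 1/2$ unless $W_j\subseteq Z$. So if some $j\in M$ has $W_j\not\subseteq Z$, the partition detaching $\{j\}$ with an empty purview part changes nothing, and $\phic(Z)=0$.

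Hence every purview with $\phic>0$ contains $\bigcup_{j\in M}W_j\supseteq M$. The maximally irreducible cause purview of a distinction has $\phic>0$, so it contains $M$. Positivity itself comes from your chain argument applied to that window union.

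Your conditioning model is what made this look hard. Under it, a purview can be constrained without containing any window: for $M=\{0,1,2\}$, conditioning forces $s_0=s_3$ on $Z=\{0,3\}$, where the IIT repertoire is flat. The containment argument is then unavailable, and you are pushed into comparing all purviews.

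\textbf{Two false claims.}
\begin{enumerate}
\item $Z=M$ is neither optimal nor even irreducible. For every $L\le N-1$, the last units of the arc read units outside $M$, so $\phic(M)=0$. The smallest purview with $\phic>0$ is the window union, which is $\{i,\dots,i+L+1\}$ when $L\le N-2$.
\item No unit is ``pinned.'' By your own count ($L$ independent equations in $L+2$ unknowns), the repertoire on that purview is uniform on four states. So the all-ones value on $M$ is not forced by the repertoire. It comes from congruence with the system-level cause state, which is $\mathbf 1$ because $F_N$ is a bijection fixing $\mathbf 1$.
\end{enumerate}

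Relatedly, your triangular independence argument does not cover $L=N-1$, where the window union wraps to all of $\ZN$. There, independence of the $N-1$ rows of $C$ is exactly where $3\nmid N$ enters.
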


The proof is the mechanism-level counterpart of the damage computation. Write $D(M)$ for
the set of units all three of whose inputs lie in $M$; for an arc of length $L$ this is
an arc of length $L-2$, nonempty exactly when $L \ge 3$, and the union of the windows of
$D(M)$ is $M$ again. That gives a positive effect side. For the cause side, an
exclusive-or unit constrains its whole window, so a purview over which the cause
information is positive must contain every window of every unit of $M$; in particular it
contains $M$. Positivity of $\phic$ then follows from severing a single unit, which costs
a factor $2^{-(N - |M|)}$, and $\phid = \min\{\phie,\phic\} > 0$.

The converse holds as well, and it is what fixes the growth rate.

\begin{proposition}[Only arcs]
\label{prop:onlyarcs}
The distinctions of $\Rring$ at the all-ones state are exactly the arcs of length
between $3$ and $N-1$ together with the whole cycle; there are $N(N-3)+1$ of them.
\end{proposition}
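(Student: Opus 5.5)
Proposition~\ref{prop:arcsare} already gives one inclusion: every arc of length at least $3$ is a distinction, and arcs of length at least $N$ all saturate to the whole cycle. The plan is to prove the converse, that a mechanism $M \subseteq \ZN$ which is not an arc of length at least $3$ (and is not the whole cycle) has $\phid(M) = 0$. Then count. The count is immediate: for each length $L$ with $3 \le L \le N-1$ there are exactly $N$ arcs, one per starting residue, because an arc of length $L<N$ is determined by its first unit. That gives $N(N-3)$ arcs, and the whole cycle adds $1$, for $N(N-3)+1$ in total.

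For the converse I would work on the effect side, because it is the side that already collapses to a combinatorial count (the mechanism-level analogue of Proposition~\ref{prop:damage}). With exclusive-or dynamics, a unit $z$ is constrained by $M$ only when its whole window $W_z$ lies in $M$; otherwise its repertoire is flat at $1/2$. So the only units $M$ can specify effects on are those in $D(M)=\{z : W_z \subseteq M\}$. If $D(M)=\emptyset$, every effect purview is flat, so $\phie=0$ and therefore $\phid=0$. This covers every $M$ containing no three consecutive units, in particular every $M$ with $|M|\le 2$.

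The remaining case is a non-arc $M \ne \ZN$ that contains at least one full window. Such an $M$ splits into at least two maximal runs (maximal arcs) $R_1,\dots,R_k$ with $k\ge 2$, separated by gaps. Every window inside $M$ lies inside a single run, since a window has three consecutive units and cannot cross a gap. So the effect repertoire over any purview factorizes as a product over runs. Take the mechanism partition $\{R_1,\dots,R_k\}$, or simply $R_1$ versus the rest, and cut the purview along the same lines. This cut leaves every window intact, so the partitioned repertoire equals the unpartitioned one and $\phie=0$.

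The main obstacle is the subtlety that IIT~4.0 mechanism partitions also let the purview parts be reassigned. I must confirm that the minimizing partition may be chosen to respect the runs whatever the maximal purview is, including the case where the selected purview contains units outside $D(M)$. Those units are flat and factor trivially, so the partition can place them in either part, and I expect this to go through. The cause side is not needed, since $\phid=\min\{\phie,\phic\}$. Given the paper's remark that this proposition is the exception to full verification, I would be candid that this factorization step over all purview and partition choices is the part to check most carefully.
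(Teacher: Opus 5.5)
Your argument is correct and is essentially the paper's own sketch. The effect-side criterion in Remark~\ref{rem:status} (positive $\phie$ if and only if some connected component of $D(M)$ in the window-overlap graph has window union equal to $M$) fails for every non-arc for exactly your reason: a window inside $M$ cannot cross a gap, so a partition along runs leaves the effect repertoire unchanged. You also streamline it legitimately by proving only this vanishing half and letting Proposition~\ref{prop:arcsare} supply the converse inclusion, so the paper's cause-side criterion is not needed for the classification.
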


\begin{remark}[Status]
\label{rem:status}
Proposition~\ref{prop:onlyarcs} is the one statement in this paper that is \emph{not}
part of the formal development. It is established by exhaustive evaluation over all
$2^N$ mechanisms for $N = 8, 10, 11, 13$, cross-checked against an independent exact
implementation, with the counts $41, 71, 89, 131$ agreeing with $N(N-3)+1$ and with no
non-arc distinction and no arc of length one or two appearing. The argument for general
$N$ is the criterion used in that evaluation---the effect information of $M$ is positive
if and only if some connected component of $D(M)$ in the window-overlap graph has window
union equal to $M$, and the cause information is positive if and only if $M$ is connected
in that graph---from which the classification follows; formalizing it is routine but was
not carried out. Nothing in Theorems~\ref{thm:exclusion} or~\ref{thm:lower} depends on
it. Its role here is to show that the quadratic rate is forced: a family of $2^{cN}$
distinctions sharing a unit would give $\Phi = 2^{2^{\Theta(N)}}$, and
Proposition~\ref{prop:onlyarcs} says no such family exists on this cycle.
\end{remark}

\section{The lower bound}
\label{sec:lower}

The mechanism that converts many distinctions into large $\Phi$ is the following, and it
is where the relations enter.

\begin{lemma}[Family bound]
\label{lem:family}
Let $F$ be a family of distinctions of a substrate, all containing a common \emph{stated
unit} $p$, that is a pair (unit, state), in their supports, each satisfying $\phid(d)/|\mathrm{supp}(d)| \ge c > 0$. Then
\[
  \Phi_{\max} \;\ge\; \bigl(2^{|F|} - 1 - |F|\bigr)\,c .
\]
\end{lemma}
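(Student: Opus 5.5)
Fix an arbitrary selector $\sigma$. Since $\Phi(\sigma)$ is the sum of $\phid$ over the distinctions of the structure plus $\phir$ over its relations, and every term is nonnegative, it suffices to bound the relation part from below. I would drop all distinctions outside $F$ and all relations not consisting entirely of members of $F$, keeping only the relations among subfamilies $R \subseteq F$ with $|R| \ge 2$. There are exactly $2^{|F|} - 1 - |F|$ such subfamilies. The bound $\Phi_{\max} \ge \Phi(\sigma)$ then gives the claim, as soon as each such $R$ contributes at least $c$.

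The hypothesis has to hold for the distinctions actually chosen by $\sigma$. So I would read ``family of distinctions'' as distinctions that every selector contains, with supports computed from the selected purviews, as the paper indicates it later arranges via Proposition~\ref{prop:arcsare}. With that fixed, the key step is the following. For $R \subseteq F$ with $|R| \ge 2$, the common stated unit $p$ lies in the intersection of the supports of the members of $R$, so the relation among $R$ exists with nonempty relation face/overlap.

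Next, I would invoke the IIT~4.0 definition of relation integrated information as formalized in the library. In that definition, $\phir(R)$ equals the overlap size times the minimum over $d \in R$ of the per-unit integrated information $\phid(d)/|\mathrm{supp}(d)|$. More precisely, it takes the minimum over members of $\phid$ divided by the purview size, times the size of the joint overlap; in the formalization the overlap is measured on stated units. The overlap contains $p$, so its size is at least $1$. Each ratio $\phid(d)/|\mathrm{supp}(d)|$ is at least $c$ by hypothesis. Hence $\phir(R) \ge 1 \cdot c = c$.

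Summing over the $2^{|F|} - 1 - |F|$ subfamilies gives $\Phi(\sigma) \ge \bigl(2^{|F|} - 1 - |F|\bigr)c$, for every $\sigma$, and hence the same bound for $\Phi_{\max}$.

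The main obstacle is the third step: matching the exact normalization in the formalized $\phir$. The definition may divide by the union of the cause and effect purviews, by the face-specific purview, or by the union over faces. It may also count the overlap over faces rather than over stated units. I would first check that the library's $\phir$ is monotone in the overlap and bounded below by the minimum of $\phid/|\mathrm{supp}|$ whenever the overlap is nonempty. This is where the choice of ``support'' as the union of the two purviews must agree with the denominator used in the formalization. If it instead uses a smaller purview, the bound only improves.

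A secondary point is ensuring that distinct subfamilies $R$ are counted as distinct relations in the sum, not merged. That should follow from relations being indexed by sets of distinctions.
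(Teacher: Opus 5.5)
Your proposal is correct and is essentially the paper's argument. Each subfamily $R\subseteq F$ with $|R|\ge 2$ is a relation because the supports meet congruently at $p$; it contributes $\phir(R)=|\bigcap_{d\in R}\mathrm{supp}(d)|\cdot\min_{d\in R}\phid(d)/|\mathrm{supp}(d)|\ge c$; and there are $2^{|F|}-1-|F|$ such subfamilies. The paper phrases this as the Zaeemzadeh--Tononi counting identity specialized to the single stated unit $p$, which is the same sum regrouped, and it settles your normalization worry by noting that the formalized overlap is exactly the intersection of supports (shown equal to the union of face purviews by a library theorem).
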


This is not new, and we claim no credit for it: it is the counting identity
of~\cite[Eq.~11]{zaeemzadeh2024} specialized to a single stated unit. That identity
resolves the relation sum as
\[
  \sum_{\mathbf{d} \subseteq \mathcal{D}} \phir(\mathbf{d})
  \;=\; \sum_{|\mathbf{d}| = 1} \phir(\mathbf{d})
     \;+\; \sum_{o} \sum_{i=1}^{|\mathcal{Z}(o)|}
        \frac{\phid^{(i)}}{|z^{(i)}|}\bigl(2^{|\mathcal{Z}(o)| - i} - 1\bigr),
\]
where $\mathcal{Z}(o)$ collects the distinctions whose support contains the stated unit
$o$. Taking a single $o$, discarding the self-relations, and putting every ratio at its
floor $c$ leaves $c \sum_{i=1}^{|F|} (2^{|F| - i} - 1) = \bigl(2^{|F|} - 1 - |F|\bigr) c$,
which is the lemma. Concretely: every subset of $F$ with at least two members is a
relation, since its distinctions overlap congruently at $p$, and each contributes at least
$c$. Only the cardinality of the family and the uniform floor enter; the substrate is
arbitrary. What the present paper supplies is not this step but its input --- a family of
certified size, Proposition~\ref{prop:arcsare} --- and that is where the work lies.

One point deserves comment, since the two natural readings of $\phir$ differ in general.
The formalization computes the overlap as the intersection $\bigwedge_{D \in \mathbf{d}}
\mathrm{supp}(D)$ of the supports, whereas~\cite[Eq.~55]{iit40} takes the union of the
face purviews of~\cite[Eq.~50]{iit40}. Each face purview is contained in that
intersection, and conversely every stated unit of the intersection lies on some face as
soon as $|\mathbf{d}| \ge 2$; the two therefore agree for relations, which is the only
case in which $\phir$ is used. The equality of the two is a theorem of the imported
library~\cite{integ-repo} (\texttt{mem\_inf\_support\_iff\_exists\_face}), so the
simplified formula is not an approximation but the same quantity, and no discrepancy can
arise here in either direction.

\begin{proof}[Proof of Theorem~\ref{thm:lower}]
Fix the unit $0$ and index arcs through it by a pair $(s,e)$ with
$1 \le s < \lfloor N/2 \rfloor$ and $1 \le e < N - \lfloor N/2 \rfloor$, letting
$F$ consist of the arcs running from $-s$ to $e$. Each contains $0$ and has length
$s + e + 1 \ge 3$, so by Proposition~\ref{prop:arcsare} each is a distinction whose
support contains the stated unit $(0, \mathrm{true})$. Distinct pairs give distinct arcs,
so
\[
  |F| \;=\; \bigl(\lfloor N/2 \rfloor - 1\bigr)
            \bigl(N - \lfloor N/2 \rfloor - 1\bigr) \;\ge\; N^{2}/8 .
\]
The support of each contains at most $2N$ stated
units and
$\phid \ge (1/2)^{N}$ by the cause-side estimate of Proposition~\ref{prop:arcsare}, so
$c = (1/2)^N/(2N)$ is a uniform floor. Lemma~\ref{lem:family} gives
\[
  \Phi_{\max} \;\ge\; \bigl(2^{|F|} - 1 - |F|\bigr)\cdot\frac{(1/2)^N}{2N}
  \;\ge\; 2^{\lfloor N^2/8\rfloor}\cdot\frac{(1/2)^N}{2N} - 1 . \qedhere
\]
\end{proof}

\begin{remark}[On the constants]
\label{rem:constants}
Both constants are generous. The true family has $\sum_{L=3}^{N-1}\min(L+2,N)+1 \approx
N^2/2$ members, measured as $35, 56, 68, 95$ at $N = 8, 10, 11, 13$. The true floor is
not $2^{-N}/(2N)$ but approximately $1/(2N)$: the support of an arc of length $L$ has
$\min(L+2,N)$ units, and $\phid$ equals $1/4$ for $L = 3$, $1/2$ for
$4 \le L \le N-2$, $3/2$ for $L = N-1$ and $4$ for the whole cycle. Substituting these
gives a bound of the form $2^{N^2/2}/(2N)$, non-vacuous from small $N$ on. We proved the
crude version because it suffices for the growth rate and because the sharp values
require the exact purview computation of Proposition~\ref{prop:onlyarcs}, which is not
formalized.
\end{remark}

\section{Method: direction, assistance, and certification}
\label{sec:method}

The development behind this paper was produced by a division of labour that we think is
worth describing, because the result would not have been reachable by either party alone
and because the arrangement generalizes.

The mathematical direction was human, beginning with the decision that set everything
else in motion: to study integrated information with a proof assistant at all. That is
not an obvious move. Integrated information theory is a theory of consciousness, its
literature is computational and empirical, and its central quantities had never been
formalized; nothing in the field's practice suggested that a question about them would
yield to interactive theorem proving rather than to more computing power. The decision to
formalize the definitions from first principles, and then to treat the formal system as
an instrument of \emph{discovery} rather than as a means of certifying results obtained
elsewhere, is the reason the question could be posed in a form admitting an answer at
all. From it followed the rest of the direction, which consisted largely in asking the
right questions at the right moments.

The assistance was mechanical and operated at two scales. Exact evaluation of small
systems---our own implementations, validated against a literal transcription of the
formal definitions, and against PyPhi---supplied the conjectures: the shape of the
extremal partitions, the value of $\phis$ on the whole and on competitors, the
identification of the distinctions as arcs. The formalization itself was then carried out
by language-model assistants, several of which were used,
working file by file against stated blueprints, with the statements fixed in advance by
us and the proofs left to them. They did not only execute: the criteria for positivity of
the effect and cause information, and more than one counterexample to a lemma we had
stated, came from them. Which assistant produced which proof is of no
consequence to the result: what is checked is the development, not its provenance.

Certification was the proof assistant's. This matters more than it may appear. In the
course of this work, several plausible intermediate claims turned out to be false, and
they were caught not by review but by the impossibility of closing a goal: a lemma about
severed blocks that holds for one wiring and fails for another; a trap lemma stated
without a damage budget, refuted by an eight-unit counterexample; and, most consequential
of all, the doubly-exponential conjecture itself. We also record a methodological hazard
of the computational side, which cost us a false landscape before it was noticed: exact
enumerations truncated by a damage bound return \emph{lower} bounds, and can hide the
true minimizer when it lies at high damage. Where a number in this paper comes from
enumeration, it comes from a certified or exhaustive run.

\section{The formal development}
\label{sec:formal}

The development is written in Lean~4 against Mathlib, and is in two parts.

The definitions of IIT~4.0 are formalized from first principles---repertoires, causal
marginalization, the integrated information of mechanisms, distinctions, relations,
$\Phi$, system partitions, $\phis$, and the exclusion postulate---in a separate library,
released as~\cite{integ-repo} and described in the companion
paper~\cite{mayeux-formalization}. That library transcribes the definitions and develops general
consequences of them, but proves nothing about any particular substrate.

The results of this paper form a second library~\cite{circ-repo}, and the relation between the two is
worth stating exactly, because it is what rules out a class of doubt that formalized
results are otherwise open to. The second library does not restate, re-transcribe, copy
or adapt the definitions: it \emph{imports} the first as a package dependency pinned to
release \texttt{v1}, and the build fails if that dependency is absent or altered.
Consequently the symbols $\phid$, $\phir$, $\Phi$, $\phis$ and the predicate defining a
complex, as they occur in Theorems~\ref{thm:exclusion} and~\ref{thm:lower}, are literally
the same Lean constants as in~\cite{integ-repo}---the same elaborated terms, not
equivalent renderings of them. No definition of that library is redeclared, overridden,
shadowed or specialized here, and nothing in this paper's development could weaken one
without failing to compile. It follows that whatever scrutiny the correspondence between
IIT~4.0 as published and IIT~4.0 as formalized has received, or will receive, in
connection with~\cite{mayeux-formalization} transfers to the present theorems unchanged;
there is no gap between the two formalizations in which a discrepancy could hide, because
there is only one formalization. What the second library adds is substrate-specific
material and nothing else. Theorem~\ref{thm:exclusion} is
\texttt{isComplex\_circNet\_univ} and Theorem~\ref{thm:lower} is
\texttt{two\_pow\_sq\_le\_PhiMax\_circNet}. Both, and every lemma they depend on, are
checked by the Lean kernel and rely on no axioms beyond the three of the ambient logic:
propositional extensionality, the axiom of choice, and quotient soundness. There are no
uses of \texttt{sorry}, of \texttt{native\_decide}, or of any other escape hatch, and no
appeal to numerical computation that the kernel does not itself perform. The two theorems
are checked in a single import closure, so they are established of the same object.

For scale: the imported definitions library is some $7{,}300$ lines, and the
substrate-specific one about $9{,}200$, of which roughly $7{,}200$ prove exclusion and
$2{,}000$ the lower bound---maximality is by some margin the more expensive half.

For concreteness we give the three principal statements verbatim, as they appear in the
development. Theorem~\ref{thm:exclusion}, exclusion, quantified over every background
\texttt{u} and every state \texttt{s}:

\begin{lean}
theorem isComplex_circNet_univ (h3 : ¬ (3 ∣ N)) (hN : 8 ≤ N) (u s : State N) :
    IsComplex (circNet N) u Finset.univ Finset.univ s
\end{lean}

\noindent Proposition~\ref{prop:arcsare}, that every arc of length at least three is a
distinction:

\begin{lean}
theorem isDistinctionMech_arc (hN : 3 ≤ N) (h3 : ¬ (3 ∣ N)) (u : State N) (a : Fin N)
    {L : ℕ} (hL : 3 ≤ L) :
    IsDistinctionMech (circNet N) Finset.univ u allOnes (arcAt a L)
\end{lean}

\noindent And Theorem~\ref{thm:lower}, the lower bound, at the all-ones state:

\begin{lean}
theorem two_pow_sq_le_PhiMax_circNet (h3 : ¬ (3 ∣ N)) (hN : 8 ≤ N) (u : State N) :
    (2 : ℝ) ^ (N ^ 2 / 8) * ((1 / 2 : ℝ) ^ N / (2 * N)) - 1
      ≤ PhiMax (circNet N) Finset.univ u allOnes
\end{lean}

\noindent Here \texttt{IsComplex}, \texttt{IsDistinctionMech} and \texttt{PhiMax} are the
constants of~\cite{integ-repo}, imported and not redefined; \texttt{circNet},
\texttt{arcAt} and \texttt{allOnes} are the cycle of Definition~\ref{def:ring}, its arcs,
and the all-ones state. Note that the two theorems quantify differently over the state,
as discussed in Section~\ref{sec:substrate}: \texttt{s} is universally quantified in the
first, whereas the second is stated at \texttt{allOnes}.

\subsection*{Acknowledgements}

This work rests on two communal efforts. The first is Mathlib, the mathematical library
of Lean~4, without which a development of this size would not have been attempted. The
second is PyPhi~\cite{pyphi}, against which the numerical work reported here was checked
and from which the conjectures that became Theorems~\ref{thm:exclusion}
and~\ref{thm:lower} originally came. The arrangement under which the formal proofs were
produced is described in Section~\ref{sec:method}.

\end{document}